\documentclass[submission,copyright,creativecommons]{eptcs}

\usepackage[activeacute,spanish,english]{babel}
\usepackage[all]{xy}
\usepackage[utf8]{inputenc}
\usepackage{graphicx} 
\usepackage{latexsym}
\usepackage{amsfonts}
\usepackage{amsmath} 
\usepackage{amsthm}
\usepackage{amssymb} 
\usepackage{tikz}
\usetikzlibrary{shapes.geometric,arrows}
\usetikzlibrary{arrows,automata,matrix,scopes}
\usetikzlibrary{positioning}
\usepackage{bbding}

\usepackage{float} 
\usepackage[section]{placeins}

\usepackage{eurosym}  
\usepackage{dsfont}       
\usepackage{lipsum}
\usepackage{paralist} 

\newtheorem{definition}{Definition}
\newtheorem{theorem}{Theorem}
\newtheorem{proposition}{Proposition}
\newtheorem{lemma}{Lemma}
\newtheorem{example}{Example}
\newtheorem{corollary}{Corollary}
\newtheorem{remark}{Remark}

\begin{document}
\title{Proving Continuity of Coinductive Global Bisimulation Distances: A Never
Ending Story\thanks{This work was Partially supported by $^{1,2}$ The Spanish project STRONGSOFT (TIN2012-39391-C04-04)  and UCM-Santander grant GR3/14. \newline$^{2}$ The Spanish project N-GREENS Software-CM (S2013/ICE-2731). \newline$^{3}$ The Icelandic project \emph{Processes and Modal Logics} (project nr.~100048021) and \emph{Decidability and Expressiveness for Interval Temporal Logics} (project nr.~130802-051) of the Icelandic Research Fund. \newline$^{1,2,3}$ The EEA Grants from the project \emph{Formal Methods for the Development and Evaluation of Sustainable Systems} (001-ABEL-CM-2013). }}
\author{David Romero-Hern\'andez $^{1}$ \quad\qquad David de Frutos-Escrig $^{2}$ \quad\qquad Dario Della Monica $^{3}$
\institute{$^{1,2}$ Facultad CC. Matem{\'a}ticas, Universidad Complutense de Madrid, Madrid, Spain.\\ Departamento de Sistemas Inform\'aticos y Computaci\'on} \institute{$^{3}$ ICE-TCS, School of Computer Science, Reykjavik University, Reykjavik, Iceland
} \email{dromeroh@pdi.ucm.es \quad\qquad defrutos@sip.ucm.es \quad\qquad dariodm@ru.is}}
\def\titlerunning{Proving Continuity of Coinductive Global Bisimulation Distances: A Never-ending Story}
\def\authorrunning{D. Romero-Hern\'andez, D. de Frutos-Escrig \& D. Della Monica}

\maketitle 

\pagestyle{empty}

\begin{abstract}
We have developed a notion of global bisimulation distance between processes which goes somehow beyond the notions of bisimulation distance already existing in the literature, mainly based on bisimulation games. Our proposal is based on the cost of transformations: how much we need to modify one of the compared processes to obtain the other. Our original definition only covered finite processes, but a coinductive approach allows us to extend it to cover infinite but finitary trees. After having shown many interesting properties of our distance, it was our intention to prove continuity with respect to projections, but unfortunately the issue remains open. Nonetheless, we have obtained several partial results that are presented in this paper.
\end{abstract}

\section{Introduction}
The notion of bisimulation has been extensively used to characterize the equivalence between processes \cite{Mil89, Par81, San11}. Bisimulations are coinductive proofs of that equivalence, which is called the bisimilarity relation. Certainly, bisimilarity is a quite natural relation, as suggested by the existence of several different formulation of the notion of bisimulation, e.g., in terms of bisimulation games~\cite{Sti98}. 

Up to bisimilarity, the semantics of processes is characterized by unordered trees without repeated (equivalent) branches, which are thus considered the canonical semantic model.
Therefore, two processes are equivalent if, and only if, they have the same semantic tree. But when two processes are not equivalent we have no way of expressing ``how different'' they are. Recently, several notions of bisimulation distance have been proposed, based on variants of the bisimulation game \cite{chr10_quantitative, amrs07, dlt08, ftl11}: while the original game imposes to the defender the obligation of replicating exactly any move by the attacker, in these variants the defender has the possibility of ``cheating'', by replying an attacker's move by choosing similar, but not equal, actions. However, when doing that, the defender have to pay a price according to the distance between the two involved actions.

This is a very suggestive path to follow when defining a bisimulation distance, and it comes with several efficient ways to compute it. Despite such desirable properties, we believe that alternative approaches are possible and worth being studied.
Previous works~\cite{rf12a, rf12b, rf14} by the first two authors of this paper contain several examples which mainly show that the ``classical'' distances based on variants of the bisimulation game are \emph{local}, in the sense that they only capture the difference between a single pair of executions of the two processes, thus failing in characterizing the distance between the processes in their entirety.

In the quest for a \emph{global} notion of distance, considering all executions at the same time, we have proposed a novel approach~\cite{rf12a, rf12b, rf14}: since trees express the bisimulation semantics, we looked for a natural distance between trees that defines what we called \emph{global bisimulation distance}.
For this purpose, we defined \emph{atomic transformations} between processes; then, a sequence of transformations provides an upper bound for the distance between the two processes at the two ends of the sequence.

This approach is suitable for comparing finite processes, but it is clearly inadequate when comparing infinite finitary processes with infinitely many differences. We are interested in obtaining sound bounds also in the latter case, whenever the series collecting all those differences converges. Instead of looking for a complex scenario based on the use of limits, we introduced in \cite{rf14} a coinductive framework which allows us to obtain bounds for those distances in a very simple way (coinduction is sometimes presented as an ``inductionless induction'' mechanism).

As we said before, classical bisimulation distances are easy to calculate, even (or we should better say, especially) in the quantitative cases (e.g., probabilistic \cite{bw05}, timed \cite{Bre05}), where calculus provides the machinery to obtain the corresponding fixed points. It is true that the computation of our (bounds for the) distance requires specific techniques in each case, but our coinductive approach benefits from the power of coinduction to accomplish this task.

In order to give a broader support to our approach, it was our intention to prove the continuity of our distance: we expected that whenever all the pairs of projections of two (possibly infinite) processes are at some fixed distance, the (full) processes themselves will be at that distance. Unfortunately, this paper tells an unfinished story: we were confident about obtaining a perhaps bit involved, but somehow ``standard'' proof, but our creature has revealed itself as an irresistible beast. Our, more and more, sophisticated attempts to domesticate it have crashed over and over, revealing new faces of the beast, whose actual existence remains uncertain. We all know how difficult is to disprove the existence of Nessy, Bigfoot, or E.T. Certainly, finding indisputable evidence of their existence would put an end to the mystery, but still we believe that this is impossible, as they simply do not exist \dots or do they?

Let us go with our story. It is not easy to tell unfinished stories, but we think that we have to do it, because we really enjoyed many exciting adventures in pursuing our quest and because we feel we might have paved the way for somebody else's success.

\section{Classic and global bisimulation distances} \label{ClassicalAndGlobal}
Our starting point will be the operational definition of processes as \emph{Labelled Transition Systems (lts)} over an alphabet $\mathds A$, which are tuples $(N,\mathit{succ})$, where $N$ is a set of states and $\mathit{succ}:N\rightarrow \mathcal{P}(\mathds{A}\times N)$. When we want to distinguish an initial state $n_0 \in N$, we write $(N,\mathit{succ},n_0)$. Sometimes we omit the component $\mathit{succ}$, simply considering $(N,n_0)$. Finite computations or \emph{paths}, are sequences $n_0a_1n_1\dots a_kn_k$ with $(a_{i+1},n_{i+1})\in \mathit{succ}(n_i)$ $\forall i \in\{0\dots k-1\}$. We denote the set of paths of an $lts$ $(N,n_0)$ by $Path(N,n_0)$.

\begin{definition}
We say that an lts $(N,n_0)$ is (or defines) a \emph{tree} $t$ if for all $n\in N$ there is a single path $n_0a_1n_1\dots a_jn_j$ with $n_j=n$. 
Then, we say that each node $n_k$ is at \emph{level $k$} in $t$, and define $Level_k(t)=\{n\in N \mid n$ is at level $k$ in $t\}$. We define the \emph{depth} of $t$ as $depth(t)=sup\{l\in\mathds{N}\mid Level_l(t)\neq\emptyset\}\in \mathds{N}\cup\{\infty\}$. We denote by $Trees(\mathds{A})$ the class of trees on the alphabet $\mathds{A}$, and by $FTrees(\mathds{A})$, the subclass of finite trees. 
\end{definition}

Every node $n$ of a tree $t=(N,n_0)$ induces a subtree $t_n=(N_n,n)$, where $N_n$ is the set of nodes $n^\prime_k\in N$ such that there exists a path $n^\prime_0a_1n^\prime_1\dots a_kn^\prime_k$ with $n^\prime_0=n$. We decompose any tree $t$ into the formal sum $\sum_{n_{1j}\in Level_1(t)} a_jt_{n_{1j}}$ (we denote the empty sum by {\bf 0}). Since our trees are unordered, by definition, this formal sum is also unordered. 

For each tree $t\in Trees(\mathds{A})$, we define its \emph{first-level width}, that we represent by $||t||$, as $||t||=|Level_1(t)|$. We also define the \emph{first $k$-levels width of $t$}, denoted by $||t||_k$, as $||t||_k=max\{||t_n||\mid n\in \bigcup_{l < k} Level_l(t)\}$. \emph{Finitary trees} are those such that $||t||_k < \infty$, $\forall k \in \mathds{N}$. We denote by $FyTrees(\mathds{A})$ the collection of  \emph{finitary trees} in $Trees(\mathds{A})$.

\begin{definition}\sloppy
Given an lts with initial state $(N,\mathit{succ},n_0)$, we define its unfolding, denoted by $\it{unfold}(N)$, as the tree $(\overline{N}, \overline{\mathit{succ}}, \overline{n_0})$, where $\overline{N}=Path(N,n_0)$, $\overline{\mathit{succ}}(n_0a_1\dots n_k)=\{(a,n_0a_1\dots n_k a n^\prime)\mid (a,n^\prime)\in \mathit{succ}(n_k)\}$, and $\overline{n_0}=n_0$.  An lts is finitely branching when its unfolding is a finitary tree.
\end{definition}

%
\begin{definition}
Given a tree $t=(N,\mathit{succ},n_0)$ and $k\in\mathds{N}$, we define its $k$-th cut or projection, denoted by $\pi_k(t)$, as the restriction of $t$ to the nodes in $\bigcup_{l\leq k} Level_l(t)$:
\begin{center}
$\pi_k(t)=(\pi_k(N),\mathit{succ}_k,n_0)$, where $\pi_k(N)=\bigcup_{l\leq k} Level_l(t)$, $\mathit{succ}_k(n)=\mathit{succ}(n)$ if $n\in \bigcup_{l<k} Level_l(t)$, and $\mathit{succ}_k(n)=\emptyset$ if $n\in Level_k(t)$.
\end{center}
\end{definition}

In this paper, we focus on finitely branching lts, and thus on finitary trees.
Each finitary tree is univocally defined by the sequence of its projections: $\forall t,t^\prime\in FyTree(\mathds{A}) ((\forall k\in \mathds{N} \pi_k(t) \! = \! \pi_k(t^\prime))  \! \Rightarrow \! t \! = \! t^\prime)$.

We consider domains of actions $(\mathds{A},{\bf d})$, where ${\bf d}:\mathds{A}\times \mathds{A} \rightarrow \mathds{R}^+\cup\{\infty\}$ is a distance between actions, with ${\bf d}(a,b)={\bf d}(b,a)$, ${\bf d}(a,b)=0$ $\Leftrightarrow$ $a=b$, and ${\bf d}(a,c)+{\bf d}(c,b)\geq {\bf d}(a,b)$, $\forall a,b,c \in \mathds{A}$, where $+$ is extended to $\mathds{R}^+\cup\{\infty\}$ as usual.  We assume that ${\bf d}(a,b)=\infty$ when the value ${\bf d}(a,b)$ is not specified.

When comparing pairs of  processes, it is natural \cite{chr10_distance,ahm03,ftl11} to introduce a \emph{discount factor} $\alpha \in (0,1]$. Then, the differences in the $k$-th level of the compared trees are weighted by $\alpha^k$, following the idea that differences in the far future are less important than those in the near. As a consequence, it is possible to obtain finite distances when comparing two  processes with infinitely many differences.

In \cite{rf12a}, we have presented, inter alia, an operational definition that allows us to obtain bounds for our global bisimulation distance between finite trees. These bounds are given by the cost of any transformation that turns one of the trees into the other. The following definition states which are the valid steps of those transformations and their costs. Roughly, any application of idempotency of $+$ has no cost, while the change of an action $a$ at level $k$ into another $b$, costs $\alpha^k{\bf d}(b,a)$. 
Intuitively, in what follows we write $t\rightsquigarrow^1_{\alpha,d} t'$
meaning that there is a distance step (aka 1-step transformation) between $t$
and $t'$ (with discount factor $\alpha$) whose cost is at most $d$. We use the
superscript 1 to distinguish between the 1-step relation
$\rightsquigarrow^1_{\alpha,d}$ and its transitive closure
$\rightsquigarrow_{\alpha,d}$.

\begin{definition} \label{operational}
Given a domain of actions $(\mathds{A},{\bf d})$ and a discount factor $\alpha\in (0,1]$, we inductively define the distance steps on $FTrees(\mathds{A})$ by
\vspace{0.15cm}

$1.$ $d\geq 0$ $\Rightarrow$ $(t\rightsquigarrow^1_{\alpha,d} t+t$ $\wedge$ $t+t\rightsquigarrow^1_{\alpha,d} t)$.
\hfill $2.$ $d\geq {\bf d}(a,b)$ $\Rightarrow$ $at\rightsquigarrow^1_{\alpha,d} bt$.\hspace{2.5cm}

$3.$ $t\rightsquigarrow^1_{\alpha,d} t^\prime$ $\Rightarrow$ $t+t^{\prime\prime}\rightsquigarrow^1_{\alpha,d} t^\prime+t^{\prime\prime}$.
\hfill$4.$ $t\rightsquigarrow^1_{\alpha,d} t^\prime$ $\Rightarrow$ $at\rightsquigarrow^1_{\alpha,\alpha d} at^\prime$.\hspace{2.5cm}

\vspace{0.15cm}
We associate to each distance step its level. The level of any step generated by $1.$ or $2.$ is one; while if the level of the corresponding premise $t\rightsquigarrow^1_{\alpha,d} t^\prime$ is $k$, then the level of a step generated by $3.$ (resp. $4.$) is $k$ (resp. $k+1$). 
Finally, we define the family of global distance relations $\langle\rightsquigarrow_{\alpha,d}\mid d \in \mathds{R}^+\rangle$, taking $t\rightsquigarrow_{\alpha,d} t^\prime$ if there exists a sequence $\mathcal{S}:=t=t^0\rightsquigarrow_{\alpha,d_1}^1 t^1 \rightsquigarrow_{\alpha,d_2}^1 t^2\rightsquigarrow_{\alpha,d_3}^1\dots \rightsquigarrow_{\alpha,d_n}^1 t^n=t^\prime,$ with $\sum_{i=1}^n d_i \leq d$.
\end{definition}

\section{The coinductive global bisimulation distance} \label{coinductive_section}
In order to extend our global distances to infinite trees, we have introduced in \cite{rf14} a general coinductive notion of distance. We formalize our definition in two steps. In the first one we introduce the rules that produce the steps of the \emph{coinductive transformations} between trees, starting from any family of triples $(t,t^\prime,d)$, with $t$, $t^\prime\in FyTrees (\mathds{A})$ and $d\in \mathds{R}^+$.
 
\begin{definition}[\cite{rf14}]\label{single_step}
Given a domain of actions $(\mathds{A},{\bf d})$, a discount factor $\alpha\in (0,1]$ and a family $\mathcal{D}\subseteq FyTrees(\mathds{A})\times FyTrees(\mathds{A})\times \mathds{R}^+$, we define the family of relations $\equiv_d^{\mathcal{D},\alpha}$, for $d \in \mathbb R^+$, by:
\begin{compactenum}
\item For all $d\geq 0$ we have (i) $(\sum_{j\in J} a_jt_j)+at+at \equiv_d^{\mathcal{D},\alpha} (\sum_{j\in J} a_jt_j)+at$ ,\\and (ii) $(\sum_{j\in J} a_jt_j)+at \equiv_d^{\mathcal{D},\alpha}(\sum_{j\in J} a_jt_j)+at+at$. 
\item For all $d\geq {\bf d}(a,b)$ we have $(\sum_{j\in J} a_jt_j)+at\equiv_{d}^{\mathcal{D},\alpha} (\sum_{j\in J} a_jt_j)+bt$.
\item For all $(t,t^\prime,d)\in \mathcal{D}$ we have $(\sum_{j\in J} a_jt_j)+at\equiv_{d^\prime}^{\mathcal{D},\alpha} (\sum_{j\in J} a_jt_j)+at^\prime$ for all $d^\prime\geq \alpha d$.
\end{compactenum}
\noindent To simplify the notation, we write $\equiv_d$ instead of $\equiv_d^{\mathcal{D},\alpha}$, whenever $\mathcal{D}$ and $\alpha$ are clear from the context.
\end{definition}

We say that the steps generated by application of rules $1$ and $2$ in Def.~\ref{single_step} are \emph{first level steps}; while those generated by rule $3$ are \emph{coinductive steps}.
Inspired by the proof obligations imposed to bisimulations we introduce the coinductive proof obligations imposed to the (satisfactory) families of distances. 

\begin{definition}[\cite{rf14}]\label{finitary_case}
Given a domain of actions $(\mathds{A},{\bf d})$ and a discount factor $\alpha\in (0,1]$, we say  that a family  $\mathcal{D}$ is an $\alpha$-coinductive collection of distances ($\alpha$-ccd) between finitary trees, if for all $(t,t^\prime,d)\in \mathcal{D}$ there exists a \emph{finite coinductive transformation sequence} $\mathcal{C}:=t=t^0\equiv_{d_1}t^1\equiv_{d_2}\ldots\equiv_{d_{n}}t^n=t^\prime$, with $d\geq\sum_{j=1}^n d_j$.
Then, when there exists an $\alpha$-ccd $\mathcal{D}$ with $(t,t^\prime,d)\in \mathcal{D}$, we will write $t\equiv_d^{\alpha}t^\prime$, and say that tree $t$ is at most at distance $d$ from tree $t^\prime$ wrt $\alpha$.
\end{definition} 

\begin{example} \label{ejem_infinito}
Let us consider the domain of actions $(\mathds{N}, {\bf d})$, where ${\bf d}$ is the usual distance for numbers, and the trees $t_N=unfold(N)$ and $t_{N^\prime}=unfold(N^\prime)$, with $N=\{n_0,n_1\}$, $\mathit{succ}(n_0)=\{(0,n_0),(0,n_1)\}$ and $\mathit{succ}(n_1)=\emptyset$; and $N^{\prime}=\{n_0^{\prime},n_1^{\prime}\}$, $\mathit{succ}^\prime(n_0^{\prime})=\{(0,n_0^{\prime}),(1,n_1^{\prime})\}$ and $\mathit{succ}^\prime(n_1^{\prime})=\emptyset$. Then, we have $t_{N}\equiv^{\mathcal{D},1/2}_2 t_{N^{\prime}}$, using the family $\mathcal{D}=\{(t_N,t_{N^{\prime}},2)\}$. We can prove that this is indeed a $\frac{1}{2}$-ccd, by considering the sequence: $\mathcal{C}:=t_{N}\equiv^{\mathcal{D},1/2}_1 t_{N^{\prime\prime}}\equiv^{\mathcal{D},1/2}_1 t_{N^{\prime}}$, where $t_{N^{\prime\prime}}=unfold(N^{\prime\prime})$, with $N^{\prime\prime}=\{n_0^{\prime\prime},n_1^{\prime\prime},n_2^{\prime\prime},n_3^{\prime\prime}\}$, $\mathit{succ}^{\prime\prime}(n_0^{\prime\prime})=\{(1,n_1^{\prime\prime}),$ $(0,n_2^{\prime\prime})\}$, $\mathit{succ}^{\prime\prime}(n_1^{\prime\prime})=\emptyset$, $\mathit{succ}^{\prime\prime}(n_2^{\prime\prime})=\{(0,n_2^{\prime\prime}),(0,n_3^{\prime\prime})\}$ and $\mathit{succ}^{\prime\prime}(n_3^{\prime\prime})=\emptyset$. 
\end{example}

It is immediate to see that our notion of distance has natural and pleasant properties such as the \emph{triangular transitivity}: for any discount factor $\alpha \in (0,1]$, whenever we have $t\equiv^{\alpha}_d t^\prime$ and $t^\prime\equiv^{\alpha}_{d^\prime} t^{\prime\prime}$, we also have $t\equiv^{\alpha}_{d+d^\prime}t^{\prime\prime}$.
Of course, our coinductive definition of the global bisimulation distance generalizes our operational definition for finite trees.

\begin{proposition} [\cite{rf14}]\label{coincidence_finite_coinductive}
For $t,t^\prime\in FTrees(\mathds{A})$, the operational (Def.~\ref{operational}) and the coinductive (Def.~\ref{finitary_case}) definition of global bisimulation distance between trees coincide, that means $t\equiv^{\mathcal{\alpha}}_d t^\prime \Leftrightarrow t\rightsquigarrow_{\alpha,d} t^\prime$.
\end{proposition}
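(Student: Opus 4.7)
The plan is to prove each inclusion separately, in each case translating single-step transformations between the two formalisms while preserving costs.

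For $\rightsquigarrow_{\alpha,d} \subseteq {\equiv^\alpha_d}$, I take the candidate family $\mathcal{D}_{\rightsquigarrow} := \{(t,t',d) \in FTrees(\mathds{A})^2 \times \mathds{R}^+ \mid t \rightsquigarrow_{\alpha,d} t'\}$ and verify that it is an $\alpha$-ccd. The core is an auxiliary lemma asserting that every operational single-step $u \rightsquigarrow^1_{\alpha,e} u'$ is simulated by a coinductive transformation sequence with respect to $\mathcal{D}_{\rightsquigarrow}$ of total cost at most $e$, proved by induction on the derivation. Operational rules $1$ and $2$ translate directly into coinductive rules $1$ and $2$, treating already-present siblings as the context $\sum a_j t_j$. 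Operational rule $3$ (context extension) is handled by injecting the added siblings $t''$ into the context of every coinductive step supplied by the induction hypothesis. Operational rule $4$ yields in one shot a single coinductive rule $3$ step: the premise $t \rightsquigarrow^1_{\alpha,e} t'$ trivially gives $(t,t',e) \in \mathcal{D}_{\rightsquigarrow}$, and coinductive rule $3$ with empty context then produces $at \equiv^{\mathcal{D}_{\rightsquigarrow},\alpha}_{\alpha e} at'$. Concatenating these translations establishes the closure condition of Def.~\ref{finitary_case} for $\mathcal{D}_{\rightsquigarrow}$.

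For the converse, given an $\alpha$-ccd $\mathcal{D}$ with $(t,t',d) \in \mathcal{D}$ and $t,t' \in FTrees(\mathds{A})$, I proceed by well-founded induction on $\max(depth(t),depth(t'))$ to translate the coinductive sequence $t = t^0 \equiv_{d_1} \cdots \equiv_{d_n} t^n = t'$ into an operational one. First-level coinductive steps (rules $1$ and $2$ of Def.~\ref{single_step}) convert into the analogous operational single-steps, possibly preceded by operational rule $3$ to reinstate the context. A coinductive rule $3$ step invoking $(u,u^\ast,e) \in \mathcal{D}$ is handled by observing that $u$ and $u^\ast$ sit under a prefix $a$, so their depth is strictly less than that of the branch $au$; the induction hypothesis then yields an operational sequence $u \rightsquigarrow_{\alpha,e} u^\ast$, which is lifted under $a$ by operational rule $4$ (at discounted cost $\alpha e \leq d_i$) and then placed in the ambient sum by operational rule $3$. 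Summing over the sequence gives total cost $\leq \sum_i d_i \leq d$.

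The delicate point — and the step I expect to require the most care — is that the induction just described tacitly assumes every triple referenced during the translation lies over $FTrees(\mathds{A})$ and involves subtrees of strictly smaller depth. In general, an $\alpha$-ccd may contain triples over finitary but infinite trees, and the intermediate trees $t^i$ in a coinductive sequence need not be depth-bounded by $\max(depth(t),depth(t'))$ either, since a coinductive rule $3$ step is free to substitute a small subtree by a much deeper one. Before the recursion goes through, one must therefore show that any $\alpha$-ccd witness for a pair $(t,t')$ of finite trees can be refined to one that uses only triples over $FTrees(\mathds{A})$ of depth at most $\max(depth(t),depth(t'))$, by a pruning argument that cancels every pair of expansion/contraction substitutions introduced only to be undone later. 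Once this normalization is in place, the recursion bottoms out at $\mathbf{0}$ and the cost arithmetic is routine.
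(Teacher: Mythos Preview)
Your argument for $\rightsquigarrow_{\alpha,d} \Rightarrow {\equiv^\alpha_d}$ is correct and genuinely different from the paper's. The paper constructs the coinductive witness by first factorising and reordering the operational sequence into alternating blocks of first-level and deeper-level steps (invoking machinery from~\cite{rf14}), and then replacing each deeper block by coinductive steps obtained inductively. You bypass all of this: taking the whole operational relation as the candidate $\alpha$-ccd and observing that rule~4 of Def.~\ref{operational} directly supplies the triple needed for rule~3 of Def.~\ref{single_step} is both shorter and more transparent. One small slip: operational rule~1 may duplicate a tree $t$ with several summands, whereas coinductive rule~1 duplicates a single branch $at$; your translation of such a step therefore needs $||t||$ zero-cost coinductive steps rather than one, but this is harmless.

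For $\equiv^\alpha_d \Rightarrow {\rightsquigarrow_{\alpha,d}}$ your outline coincides with the paper's (induction on depth, step-by-step translation), and you are right that the depth-boundedness of the intermediate trees needs justification---the paper itself writes ``induction on $depth(t^i)$'' without arguing that $t^i$ is even finite. However, your proposed ``pruning argument that cancels every pair of expansion/contraction substitutions'' is the weak point: it is not clear what cancelling means here, nor why it would preserve the cost bound, since a rule~3 step carries cost $\geq \alpha d'$ determined by the triple $(t_1,t'_1,d')\in\mathcal{D}$ rather than by the shapes of $t_1,t'_1$ themselves, so there is no obvious arithmetic cancellation. A far simpler normalisation makes your induction go through cleanly: replace $\mathcal{D}$ by its projection $\pi_k(\mathcal{D})$ with $k=\max(depth(t),depth(t'))$. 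The argument of Prop.~\ref{ccd_projection} (which does not rely on the present proposition) shows this is again an $\alpha$-ccd; it still contains $(t,t',d)$ since $\pi_k(t)=t$ and $\pi_k(t')=t'$; and every tree occurring in it or in any of its witnessing sequences has depth at most $k$, so your well-founded induction on $\max(depth(\cdot),depth(\cdot))$ terminates without further difficulty.
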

\begin{proof}
$\underline{\Rightarrow|}$ We assume that $t\equiv^{\mathcal{\alpha}}_d t^\prime$
holds. This means that $(t,t^\prime,d)\in\mathcal{D}$ for some
$\alpha$-ccd $\mathcal{D}$, which in turn implies the existence of a finite
coinductive sequence
$\mathcal{C}:=t=t^0\equiv_{d_1}t^1\equiv_{d_2}\ldots\equiv_{d_{n}}t^n=t^\prime$,
with $\sum d_j \leq d$.
If $\mathcal{C}$ is the vacuous sequence ($n = 0$), then we have $t = t'$ and
$t\rightsquigarrow_{\alpha,d} t^\prime$ trivially holds.
Let us assume $n > 0$. We show that it is possible to ``unfold'' $\mathcal{C}$
into a sequence of distance steps, $\mathcal{S}$, proving that
$t\rightsquigarrow_{\alpha,d} t^\prime$.
It is enough to show that for a generic step $t^i\equiv_{d_{i+1}}t^{i+1}$ of
$\mathcal C$ there exists a sequence proving that $t^i
\rightsquigarrow_{\alpha,d_{i+1}} t^{i+1}$ (the complete sequence for $\mathcal C$
can be obtained by composition).
If $t^i\equiv_{d_{i+1}}t^{i+1}$ has been obtained by applying rule $1$ in
Def.~\ref{single_step} (we only consider the sub-case (i), the other case can
be dealt with in the same way), then we have
$t^i = t^i_1 + at_1 + at_1 \equiv_{d_{i+1}} t^i_1 + at_1 = t^{i+1}$, and applying
rules 1 and 3 in Def.~\ref{operational} we get
$t^i\rightsquigarrow_{\alpha,d_{i+1}} t^{i+1}$.
If $t^i\equiv_{d_{i+1}} t^{i+1}$ has been obtained by applying rule $2$ in
Def.~\ref{single_step}, then we have
$t^i = t^i_1 + at_1 \equiv_{d_{i+1}} t^i_1 + bt_1 = t^{i+1}$, and applying
rules 2 and 3 in Def.~\ref{operational} we get
$t^i\rightsquigarrow_{\alpha,d_{i+1}} t^{i+1}$.
If $t^i\equiv_{d_{i+1}} t^{i+1}$ has been obtained by applying rule $3$ in
Def.~\ref{single_step}, then we have
$t^i = t^i_1 + at_1 \equiv_{d_{i+1}} t^i_1 + at'_1 = t^{i+1}$, with $(t_1,t'_1,d')$
for some $d'$ such that $d_{i+1} \geq \alpha d'$.
We proceed by induction on the depth of $t^i$.
Notice that $depth(t^i)>0$ as there is no $t''$ such that
$t^i\equiv_{d_{i+1}} t''$ when $depth(t^i)=0$.
If $depth(t^i)=1$ (base case), then $t_1 = {\bf 0}$, and the only possible
witness for $(t_1,t'_1,d')$ is the vacuous coinductive sequence.
Thus, we have $t_1 = t'_1 = {\bf 0}$, and
$t_1\rightsquigarrow_{\alpha,d_{i+1}/\alpha} t'_1$ trivially holds.
By applying rules 4 and 3 in Def.~\ref{operational} we get
$t^i\rightsquigarrow_{\alpha,d_{i+1}} t^{i+1}$.
Finally, if $depth(t^i)>1$, then $t_1\rightsquigarrow_{\alpha,d'} t'_1$ holds by
inductive hypothesis, and applying rules 4 and 3 in Def.~\ref{operational} we
get $t^i\rightsquigarrow_{\alpha,\alpha d'} t^{i+1}$, which in turn, trivially
implies $t^i\rightsquigarrow_{\alpha,d_{i+1}} t^{i+1}$ (as $d_{i+1} \geq \alpha d'$).

$\underline{\Leftarrow|}$ Given a sequence of distance steps, $\mathcal{S}$, proving that $t\rightsquigarrow_{\alpha,d} t^\prime$, we can ``fold'' it into a coinductive sequence, $\mathcal{C}$, witnessing $t\equiv^{\alpha}_d t^\prime$. For each $(t,t^\prime,d)\in\mathcal{D}$ we consider the factorization of the sequence $\mathcal{S}$ and its reordering as done in \cite{rf14}. We get $t=\sum_{i\in I_0} a_it_i\rightsquigarrow_{\alpha,d_{02}} \sum_{i\in I_0}a_it^\prime_i\rightsquigarrow^1_{\alpha,d_{11}}\sum_{i\in I_1}a_it_i\rightsquigarrow_{\alpha,d_{12}}\sum_{i\in I_1} a_i t^\prime_i\rightsquigarrow^1_{\alpha,d_{21}} \dots \rightsquigarrow_{\alpha,d_{(k+1)2}} \sum_{i\in I_{k+1}}a_i t^\prime_i=t^\prime$, where for each sequence $\sum_{i\in I_j} a_it_i\rightsquigarrow_{\alpha,d_{j2}}\sum_{i\in I_j}a_it^\prime_i$ and each $i\in I_j$, we have $t_i\rightsquigarrow_{\alpha,d_{j2}^i/\alpha}t^\prime_i$ with $\sum_{i\in I_j} d_{j2}^i=d_{j2}$. Now, applying the induction hypothesis, we have $(t_i,t^\prime_i, d_{j2}^i/\alpha)\in \mathcal{D}$, for all $i\in I_j$, so that $\sum a_it_i\equiv^{\mathcal{D},\alpha}_{\alpha d_1} \sum a_it_i^1\equiv^{\mathcal{D},\alpha}_{\alpha d_2}\sum a_it^2_i\equiv^{\mathcal{D},\alpha}_{\alpha d_3}\dots \equiv^{\mathcal{D},\alpha}_{\alpha d_{|I_j|}}\sum a_it_i^{|I_j|}=\sum a_it^\prime_i$.

Therefore, each sequence $\sum_{i\in I_j} a_it_i\rightsquigarrow_{\alpha,d_{j2}^i}\sum_{i\in I_j}a_it^\prime_i$ at the factorization above can be substituted by a sequence of $|I_j|$ valid coinductive steps, getting a total distance $\sum_{j=0}^{k+1}\sum_{i\in I_j} d_{j1}^i + \sum_{j=0}^{k+1}\sum_{k=1}^{|I_j|} \alpha d_k=d$. 
\end{proof}

Even if the result above only concerns finite trees, it reveals the duality between induction and coinduction. Of course, its consequences are much more interesting in the infinite case. 

\begin{example}\label{ainfinito_binfinito}
Let $(\{ a,b \}, {\bf d})$ be a domain of actions such that ${\bf d}(a,b) = 1$
and let us consider the lts given by $N_{1,\infty}=\{n_0\}$, with $\mathit{succ}(n_0)=\{(a,n_0)\}$ and its unfolding $a^\infty$. In an analogous way, we obtain the tree $b^\infty$. We have $\pi_n(a^\infty)=a^n$, and clearly $a^\infty$ can be seen as the ``limit'' of its projections. Applying Def.~\ref{operational}, we obtain $a^n\rightsquigarrow_{1/2,2} b^n$ and thus $a^n\equiv^{1/2}_2 b^n$. We also have $a^\infty\equiv^{1/2}_2 b^\infty$, which can indeed be proved by means of the (trivial!) collection $\mathcal{D}=\{(a^\infty,b^\infty,2)\}$. We can check that $\mathcal{D}$ is an $\frac{1}{2}$-ccd using the coinductive sequence $\mathcal{C}:=a^\infty=aa^\infty\equiv^{\mathcal{D},1/2}_1ba^\infty\equiv^{\mathcal{D},1/2}_{\frac{1}{2}\cdot 2} bb^\infty=b^\infty$.
\end{example}

\section{On the continuity of the global bisimulation distance}
It is very simple to prove the following proposition, by introducing the notion of projections of $\alpha$-ccd's.
\begin{proposition}[\cite{rf14}]\label{ccd_projection}
For any $\alpha$-ccd $\mathcal{D}$, the projected family $\pi(\mathcal{D})=\{(\pi_n(t),\pi_n(t^\prime),d)\mid (t,t^\prime,d)\in \mathcal{D},\; n\in \mathds{N}\}$ is an $\alpha$-ccd. Hence, it holds $t \equiv^{\alpha}_d t^\prime \Rightarrow \pi_n(t)\equiv^{\alpha}_d \pi_n(t^\prime)$ for each
$n\in \mathds{N} $.
\end{proposition}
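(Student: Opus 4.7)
My plan is to establish the first (main) claim --- that $\pi(\mathcal D)$ is itself an $\alpha$-ccd --- by projecting, termwise, the witness sequences already provided by $\mathcal D$. The stated consequence $t\equiv^{\alpha}_d t'\Rightarrow\pi_n(t)\equiv^{\alpha}_d\pi_n(t')$ is then immediate: any $\alpha$-ccd $\mathcal D$ with $(t,t',d)\in\mathcal D$ yields an $\alpha$-ccd $\pi(\mathcal D)$ containing $(\pi_n(t),\pi_n(t'),d)$.

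To verify the defining property of $\alpha$-ccd for $\pi(\mathcal D)$, I fix an arbitrary triple $(\pi_n(t),\pi_n(t'),d)\in\pi(\mathcal D)$ coming from some $(t,t',d)\in\mathcal D$, and I take a witness sequence $\mathcal C:=t=t^0\equiv_{d_1}t^1\equiv_{d_2}\cdots\equiv_{d_m}t^m=t'$ in $\mathcal D$, with $\sum d_j\leq d$. I then apply $\pi_n$ termwise, obtaining $\pi_n(\mathcal C):=\pi_n(t^0)\equiv_{d_1}\pi_n(t^1)\equiv_{d_2}\cdots\equiv_{d_m}\pi_n(t^m)$, and argue that each projected step is a valid single step with respect to $\pi(\mathcal D)$. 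Since the costs attached to the steps are unchanged, the total cost is still at most $d$, so $\pi_n(\mathcal C)$ is the required witness.

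The core of the argument is a case analysis on which rule of Definition~\ref{single_step} produced each step $t^i\equiv_{d_{i+1}} t^{i+1}$ of $\mathcal C$. For $n=0$ both projections collapse to $\mathbf 0$ and the vacuous sequence with cost $0$ suffices. For $n\geq 1$, when the step comes from rule~1 or rule~2 (a modification at the top level), $\pi_n$ simply replaces the affected immediate subtree $t$ by its projection $\pi_{n-1}(t)$ on both sides, and the very same rule reproduces the step at cost $d_{i+1}$. When the step comes from rule~3, so that $t^i=(\sum_{j\in J} a_jt_j)+at_1\equiv_{d_{i+1}}(\sum_{j\in J} a_jt_j)+at_1'=t^{i+1}$ for some $(t_1,t_1',d')\in\mathcal D$ with $d_{i+1}\geq\alpha d'$, the projections become $(\sum_{j\in J} a_j\pi_{n-1}(t_j))+a\pi_{n-1}(t_1)$ and $(\sum_{j\in J} a_j\pi_{n-1}(t_j))+a\pi_{n-1}(t_1')$; here the key ingredient is that $(\pi_{n-1}(t_1),\pi_{n-1}(t_1'),d')$ is, by the very definition of $\pi(\mathcal D)$, a triple of $\pi(\mathcal D)$, so rule~3 instantiated inside $\pi(\mathcal D)$ produces the projected step at the same cost $d_{i+1}$.

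I do not foresee any serious obstacle: projection and the rule-based generation of coinductive steps essentially commute, precisely because $\pi(\mathcal D)$ was designed to absorb all projections of the coinductive assumptions available in $\mathcal D$. The only slightly delicate subcase is a rule-3 step at $n=1$, where $\pi_{n-1}=\pi_0$ collapses both subtrees to $\mathbf 0$; but then the triple $(\mathbf 0,\mathbf 0,d')$ still belongs to $\pi(\mathcal D)$ and is trivially witnessed by the empty sequence of length $0$, so even this degenerate case goes through smoothly and the argument is complete.
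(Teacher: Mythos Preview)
Your proposal is correct and follows essentially the same approach as the paper: project the witnessing coinductive sequence termwise, observe that first-level steps (rules~1 and~2) project to first-level steps, and that each coinductive step (rule~3) based on $(t_1,t_1',d')\in\mathcal D$ projects to a coinductive step based on $(\pi_{n-1}(t_1),\pi_{n-1}(t_1'),d')\in\pi(\mathcal D)$. Your treatment is slightly more explicit about the degenerate cases $n=0$ and $n=1$, but the argument is the same.
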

\begin{proof}
Let $\mathcal{C}:=t=t^0\equiv^{\mathcal{D},\alpha}_{d_1}\ldots\equiv^{\mathcal{D},\alpha}_{d_{k}} t^k=t^\prime$ be the coinductive sequence proving that $(t,t^\prime,d)\in \mathcal{D}$ satisfies the condition in order $\mathcal{D}$ to be an $\alpha$-ccd. Then each projected sequence $\pi_n(\mathcal{C}):=\pi_n(t)=\pi_n(t^0)\equiv^{\pi(\mathcal{D}),\alpha}_{d_1} \ldots \equiv^{\pi(\mathcal{D}),\alpha}_{d_{k}}\pi_n(t^k)=\pi_n(t^\prime)$ proves that $(\pi_n(t),\pi_n(t^\prime),d)\in\pi(\mathcal{D})$ satisfies the condition in order $\pi(\mathcal{D})$ to be an $\alpha$-ccd. It is clear that the projection under $\pi_n$ of any first level step in $\mathcal{C}$, is also a valid step in $\pi_n(\mathcal{C})$. Moreover, any coinductive step in $\mathcal{C}$ using $(t_1, t^{\prime}_1, d')\in \mathcal{D}$, can be substituted by the corresponding projected step, that uses $(\pi_{n-1}(t_1), \pi_{n-1}(t^{\prime}_1),d')\in \pi(\mathcal{D})$.
\end{proof}

\begin{remark}\label{projection_family}
Alternatively, we can consider for each $n\in\mathds{N}$ a family $\mathcal{D}_n=\pi_n(\mathcal{D})=\{(\pi_m(t),\pi_m(t^\prime),d)\mid (t,t^\prime,d)\in \mathcal{D},\; m\in \mathds{N}, m\leq n\}$, using the fact that the subtrees of a projection $\pi_n(t)$ are also projections $\pi_m(t^{\prime\prime})$ of subtrees $t^{\prime\prime}$ of $t$, for some $m<n$. These families satisfy $\pi_m(\mathcal{D})\subseteq \pi_n(\mathcal{D})$, whenever $m\leq n$.
\end{remark}

We conjecture that the converse of Prop.~\ref{ccd_projection}, asserting the continuity of our coinductive distance, is also valid. However, after several attempts have failed, or led us into a blind alley, we are now less confident of this result than we were before. We expected to be able to prove it by using the argument used in the proof of Prop.~\ref{ccd_projection}, in the opposite direction. If we would have a collection of ``uniform''\footnote{Uniformity here means that for any $n,k\in\mathds{N}$, with $k\geq n$, the steps corresponding to the first $n$ levels of any sequence $\mathcal{S}^k$ are always the same.} operational sequences $\mathcal{S}^n:=\pi_n(t)\rightsquigarrow_{\alpha,d}\pi_n(t^\prime)$, we could ``overlap'' all of them getting an ``infinite tree-structured sequence''. By ``folding'' it we would obtain the coinductive sequence proving that $t\equiv^{\alpha}_dt^\prime$. Next, a simple example.

\begin{example}[\cite{rf14}]\label{another_Ex}
Let us consider the trees $t=ac^\infty+ad^\infty$ and $t^\prime=bc^\infty+bd^\infty$, and the distance ${\bf d}$ defined by ${\bf d}(a,b)=4$, ${\bf d}(c,d)=1$. Then we have (the numbers above the arrows denote the distance step level, according to Def.~\ref{operational}; moreover, even if the arrows denote here 1-step transformations, we omit the superscript 1 for the sake of readability):
\vspace{-0.15cm} 
\begin{center}
$\pi_1(t)=a+a\stackrel{\mbox{\tiny (1)}}{\rightsquigarrow}_{\frac{1}{2},0}a\stackrel{\mbox{\tiny (1)}}{\rightsquigarrow}_{\frac{1}{2},4}b\stackrel{\mbox{\tiny (1)}}{\rightsquigarrow}_{\frac{1}{2},0}b+b=\pi_1(t^\prime)$, 

$\pi_2(t)=ac+ad\stackrel{\mbox{\tiny (2)}}{\rightsquigarrow}_{\frac{1}{2},\frac{1}{2} \cdot 1}ac+ac\stackrel{\mbox{\tiny (1)}}{\rightsquigarrow}_{\frac{1}{2},0}ac\stackrel{\mbox{\tiny (1)}}{\rightsquigarrow}_{\frac{1}{2},4}bc\stackrel{\mbox{\tiny (1)}}{\rightsquigarrow}_{\frac{1}{2},0}bc+bc\stackrel{\mbox{\tiny (2)}}{\rightsquigarrow}_{\frac{1}{2},\frac{1}{2} \cdot 1}bc+bd=\pi_2(t^\prime)$, 

$\pi_3(t)=acc+add\stackrel{\mbox{\tiny (2)}}{\rightsquigarrow}_{\frac{1}{2},\frac{1}{2}\cdot 1} acc+acd\stackrel{\mbox{\tiny (3)}}{\rightsquigarrow}_{\frac{1}{2},\frac{1}{4}\cdot 1} acc+acc\stackrel{\mbox{\tiny (1)}}{\rightsquigarrow}_{\frac{1}{2},0} acc\stackrel{\mbox{\tiny (1)}}{\rightsquigarrow}_{\frac{1}{2},4}$

\hspace{1.5cm}$bcc\stackrel{\mbox{\tiny (1)}}{\rightsquigarrow}_{\frac{1}{2},0}bcc+bcc\stackrel{\mbox{\tiny (2)}}{\rightsquigarrow}_{\frac{1}{2},\frac{1}{2}\cdot 1} bcc+bdc\stackrel{\mbox{\tiny (3)}}{\rightsquigarrow}_{\frac{1}{2},\frac{1}{4}\cdot 1}bcc+bdd=\pi_3(t^\prime)$. 
\end{center}

Now, if we consider the operational sequences, $\mathcal{S}^n$, relating $\pi_n(t)$ and $\pi_n(t^\prime)$, for any $n\in\mathds{N}$, we obtain $\pi_n(t)\rightsquigarrow_{\frac{1}{2},d_n}\pi_n(t^\prime)$, for some $d_n<6$. 
Applying Prop.~\ref{coincidence_finite_coinductive} we can turn these operational sequences into equivalent coinductive ones, $\mathcal{C}^n$, thus proving $\pi_n(t)\equiv^{1/2}_{6}\pi_n(t^\prime)$, for all $n\in \mathds{N}$:
\vspace{-0.2cm}
\begin{center}
$\mathcal{C}^1:=a+a\equiv_{0}a\equiv_{4}b\equiv_{0}b+b$, 

$\mathcal{C}^2:=ac+ad\equiv_{\frac{1}{2}}ac+ac\equiv_{0}ac\equiv_{4}bc\equiv_{0}bc+bc\equiv_{\frac{1}{2}}bc+bd$,

$\mathcal{C}^3:=acc+add\equiv_{\frac{1}{2}}acc+acd\equiv_{\frac{1}{4}}acc+acc\equiv_{0}acc\equiv_{4}bcc\equiv_{0}bcc+bcc\equiv_{\frac{1}{2}}bcc+bdc\equiv_{\frac{1}{4}}bcc+bdd$.
\end{center}
\end{example}

\begin{lemma}\label{borrar_niveles}
If $t\rightsquigarrow^1_{\alpha,d} t^\prime$ is a distance step of level $l$, then $\pi_k(t)\rightsquigarrow^1_{\alpha,d} \pi_k(t^\prime)$ is also a distance step of level $l$ for all $k\geq l$ .
Instead, if $l > k$ then we have $\pi_k(t)=\pi_k(t^\prime)$.
\end{lemma}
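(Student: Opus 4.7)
The plan is to proceed by structural induction on the derivation of $t\rightsquigarrow^1_{\alpha,d} t'$, following the four rules of Def.~\ref{operational}, and to verify both claims (the ``$k\geq l$'' case and the ``$l>k$'' case) in a single inductive pass. The base cases are rules 1 and 2, where the level is $l=1$. For rule 1 ($t+t\rightsquigarrow^1_{\alpha,d}t$ or its converse), projection distributes over the formal sum, so $\pi_k(t+t)=\pi_k(t)+\pi_k(t)$ and the same rule applies at level 1 for every $k\geq 1$; the only way $l>k$ here is $k=0$, in which case both projections collapse to the root, so they coincide. For rule 2 ($at\rightsquigarrow^1_{\alpha,d} bt$ with $d\geq {\bf d}(a,b)$), projecting at any $k\geq 1$ yields $a\pi_{k-1}(t)\rightsquigarrow^1_{\alpha,d} b\pi_{k-1}(t)$, again a valid level-1 step with the same cost; and for $k=0$ both projections equal the single-root tree.

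For the inductive step corresponding to rule 3, the premise $t\rightsquigarrow^1_{\alpha,d}t'$ has some level $l$ and we consider $t+t''\rightsquigarrow^1_{\alpha,d}t'+t''$. Since projection distributes over $+$, we would simply invoke the induction hypothesis on the premise: if $k\geq l$, the IH gives $\pi_k(t)\rightsquigarrow^1_{\alpha,d}\pi_k(t')$ at level $l$, and rule 3 lifts this to $\pi_k(t)+\pi_k(t'')\rightsquigarrow^1_{\alpha,d}\pi_k(t')+\pi_k(t'')$ at the same level; if $l>k$, the IH gives $\pi_k(t)=\pi_k(t')$, whence $\pi_k(t+t'')=\pi_k(t'+t'')$.

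The case of rule 4 is the one that needs the most care, because it is here that the level shifts. The step is $at\rightsquigarrow^1_{\alpha,\alpha d}at'$, derived from a premise $t\rightsquigarrow^1_{\alpha,d}t'$ of level $l'$, so the new step has level $l=l'+1$. If $k\geq l$, then $k-1\geq l'$, and the IH applied to the premise yields a level-$l'$ step $\pi_{k-1}(t)\rightsquigarrow^1_{\alpha,d}\pi_{k-1}(t')$; rule 4 then produces $a\pi_{k-1}(t)\rightsquigarrow^1_{\alpha,\alpha d}a\pi_{k-1}(t')$, which is exactly $\pi_k(at)\rightsquigarrow^1_{\alpha,\alpha d}\pi_k(at')$ at level $l'+1=l$, using $\pi_k(au)=a\pi_{k-1}(u)$. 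If instead $l>k$, i.e.\ $l'+1>k$, then either $k=0$ (in which case both projections are the bare root) or $k\geq 1$ and $l'>k-1$, so the IH gives $\pi_{k-1}(t)=\pi_{k-1}(t')$, whence $\pi_k(at)=a\pi_{k-1}(t)=a\pi_{k-1}(t')=\pi_k(at')$.

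The only mildly subtle point is the bookkeeping in rule 4, where both the level of the step and the depth of projection shift by one in parallel; everything else is a routine application of the fact that $\pi_k$ commutes with $+$ and with action prefixes (up to a decrement of the index). I do not expect any genuine obstacle, because the lemma is really just the statement that projection is compatible with the syntactic shape of each derivation rule.
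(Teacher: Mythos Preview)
Your proof is correct. The paper itself states this lemma without proof, treating it as immediate from the definitions; your structural induction on the derivation rules of Def.~\ref{operational} is exactly the natural way to make that immediacy explicit, and all four cases are handled properly, including the level/index shift in rule~4.
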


The continuity of our coinductive distance would mean that, whenever we have $\pi_n(t)\equiv^\alpha_d\pi_n(t^\prime)$ $\forall n\in\mathds{N}$, there should be a collection of ``uniform'' sequences proving these facts. Certainly, this is the case when we know in advance that $t\equiv^\alpha_d t^\prime$. Next, we define the projection of our operational sequences.

\begin{definition}
Given a sequence of distance steps $\mathcal{S}:=t=t^0\rightsquigarrow^1_{\alpha,d_1} t^1 \rightsquigarrow^1_{\alpha,d_2}\dots\rightsquigarrow^1_{\alpha,d_n}t^n=t^\prime$ with $d=\sum_{i=1}^n d_i$, for any $k\in \mathds{N}$ we define the sequence $\pi_k(\mathcal{S}):=\pi_k(t)=\pi_k(t^0)\rightsquigarrow^1_{\alpha,d_{i_1}} \pi_k(t^{i_1}) \rightsquigarrow^1_{\alpha,d_{i_2}} \dots \rightsquigarrow^1_{\alpha,d_{i_{l}}} \pi_k(t^{i_l})=\pi_k(t^\prime)$, where $\langle i_1,i_2,\dots,i_l\rangle$ is the sequence of indexes $i_j$ for which the level of the distance step $t^{i_j-1}\rightsquigarrow^1_{\alpha,d_{i_{j}}}t^{i_{j}}$ is less than or equal to $k$, while the rest of the steps are removed.
\end{definition}

\begin{proposition}
For any $k\in\mathds{N}$, and any sequence of distance steps $\mathcal{S}$, the projected sequence $\pi_k(\mathcal{S})$ is a sequence of distance steps, thus proving $\pi_k(t)\rightsquigarrow_{\alpha,\sum d_{i_j}} \pi_k(t^\prime)$. Therefore, we also have $\pi_k(t)\rightsquigarrow_{\alpha,d}\pi_k(t^\prime)$.
\end{proposition}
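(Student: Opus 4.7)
The plan is to proceed by a straightforward induction on the length $n$ of the sequence $\mathcal{S}$, with Lemma \ref{borrar_niveles} doing essentially all the work step by step. For the base case $n=0$ the sequence is empty, $t=t^\prime$, and so $\pi_k(t)=\pi_k(t^\prime)$ is witnessed by the empty sequence of cost $0$, which is trivially bounded by $d=0$.

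For the inductive step I would split $\mathcal{S}$ into its first step $t^0\rightsquigarrow^1_{\alpha,d_1}t^1$ (of some level $l$) and its tail $\mathcal{S}^\prime:=t^1\rightsquigarrow^1_{\alpha,d_2}\dots\rightsquigarrow^1_{\alpha,d_n}t^n$. By the induction hypothesis applied to $\mathcal{S}^\prime$, the projection $\pi_k(\mathcal{S}^\prime)$ is a valid sequence of distance steps from $\pi_k(t^1)$ to $\pi_k(t^n)$ whose total cost equals the sum of those $d_j$ ($j\geq 2$) whose corresponding step has level $\leq k$. I then split on the level $l$ of the first step: if $l\leq k$, Lemma \ref{borrar_niveles} ensures that $\pi_k(t^0)\rightsquigarrow^1_{\alpha,d_1}\pi_k(t^1)$ is itself a valid distance step of level $l$, so I prepend it to $\pi_k(\mathcal{S}^\prime)$ to obtain exactly $\pi_k(\mathcal{S})$; if $l>k$, the same lemma gives $\pi_k(t^0)=\pi_k(t^1)$, so $\pi_k(\mathcal{S}^\prime)$ is already a sequence from $\pi_k(t^0)$ to $\pi_k(t^n)$, and removing the useless first step matches the definition of $\pi_k(\mathcal{S})$.

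In either case I obtain that $\pi_k(\mathcal{S})$ is a genuine sequence of distance steps from $\pi_k(t)$ to $\pi_k(t^\prime)$ whose total cost is $\sum d_{i_j}$, precisely as required by Def.~\ref{operational}. To conclude the second statement, I would just observe that $\sum d_{i_j}\leq \sum_{i=1}^n d_i=d$ (the omitted steps carry non-negative cost), and that $\rightsquigarrow_{\alpha,d^\prime}$ is monotone in $d^\prime$ (a sequence witnessing a bound $d^\prime$ also witnesses any larger bound), so $\pi_k(t)\rightsquigarrow_{\alpha,d}\pi_k(t^\prime)$.

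I do not foresee any real obstacle here: Lemma \ref{borrar_niveles} already encapsulates the only non-trivial content, namely that high-level steps become identities under projection. The only point requiring minor care is the bookkeeping that guarantees the indices $i_1<i_2<\dots<i_l$ in the definition of $\pi_k(\mathcal{S})$ coincide with those produced by the inductive construction; this is immediate from the fact that removing a single step at the front commutes with removing all high-level steps further along the sequence.
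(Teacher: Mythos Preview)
Your argument is correct: the induction on the length of $\mathcal{S}$, with Lemma~\ref{borrar_niveles} handling each step according to whether its level is $\leq k$ or $>k$, is exactly the right way to justify the statement, and the final monotonicity observation for the bound $d$ is all that is needed for the second clause. The paper in fact gives no proof of this proposition at all, treating it as an immediate consequence of Lemma~\ref{borrar_niveles} and the definition of $\pi_k(\mathcal{S})$; your write-up simply makes explicit the obvious induction the authors left implicit.
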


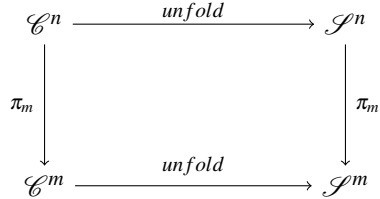
\begin{figure}
\begin{center}
\begin{tikzpicture}
\node(Cn) at (-2,1.2){$\mathcal{C}^n$};
\node(Sn) at (2,1.2){$\mathcal{S}^n$};
\node(Cm) at (-2,-0.95){$\mathcal{C}^m$};
\node(Sm) at (2,-0.95){$\mathcal{S}^m$};

\begin{scope}[every node/.style={font=\scriptsize\itshape}]
\path [->]   
            (Cn) edge node[auto,swap]{$\pi_m$}(Cm)
            (Sn) edge node[auto]{$\pi_m$}(Sm)
            (Cn) edge node[above=-1mm]{$unfold$} (Sn)
            (Cm) edge node[auto]{$unfold$} (Sm)
            ;
\end{scope}
\end{tikzpicture}
\end{center}
\vspace{-0.5cm}
\caption{Relating projections of sequences}\label{diagrama}
\end{figure}

Now, let us start with a coinductive sequence $\mathcal{C}$ relating two (possibly infinite but finitary) trees $t$ and $t^\prime$.

\begin{corollary}
The projection of coinductive sequences and those of distance steps that the former induces, are related by the commutative diagram in Fig.~\ref{diagrama}. There we denote by $\mathcal{C}^n$ the projections of a coinductive sequence $\mathcal{C}$.
\end{corollary}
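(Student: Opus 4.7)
The plan is to establish commutativity of the diagram by piecing together three results already proved in the excerpt: Prop.~\ref{ccd_projection} for the left vertical arrow, the proposition on projections of operational sequences for the right vertical arrow, and Prop.~\ref{coincidence_finite_coinductive} for the horizontal \emph{unfold} arrows.

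First, I would check that the four maps are well defined on the objects at hand. The left vertical arrow is $\pi_m$ applied step-by-step to $\mathcal{C}^n$; since $\pi_m \circ \pi_n = \pi_m$ for $m \leq n$, we obtain $\pi_m(\mathcal{C}^n) = \mathcal{C}^m$, which is a coinductive sequence by Prop.~\ref{ccd_projection} (using $\pi_m(\mathcal{D}) \subseteq \pi_n(\mathcal{D})$, as noted in Remark~\ref{projection_family}). The right vertical arrow is $\pi_m$ applied to the operational sequence $\mathcal{S}^n$, which yields an operational sequence $\mathcal{S}^m$ by the proposition immediately preceding the diagram. The horizontal \emph{unfold} arrows are furnished by Prop.~\ref{coincidence_finite_coinductive}, applicable because each $\pi_n(t)$ and $\pi_n(t')$ is finite. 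To make the diagram speak of genuine functions rather than relations, I would fix a canonical version of the $(\Rightarrow)$ direction of Prop.~\ref{coincidence_finite_coinductive}, so that every finite coinductive sequence is unfolded in a deterministic way.

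The core of the proof is then the commutativity equality $\pi_m(\mathrm{unfold}(\mathcal{C}^n)) = \mathrm{unfold}(\pi_m(\mathcal{C}^n))$, which I would establish step-by-step along $\mathcal{C}^n$. A first-level coinductive step, arising from rules $1$ or $2$ of Def.~\ref{single_step}, unfolds to a single distance step of level $1$, and $\pi_m$ commutes with it trivially by Lemma~\ref{borrar_niveles}. A coinductive step using a triple $(t_1, t'_1, d') \in \mathcal{D}$ unfolds, by induction on the depth of the affected subtree, into a nested chain of distance steps whose levels are exactly one plus the level of the steps produced inside the subtree. Applying Lemma~\ref{borrar_niveles} to this chain, the projection $\pi_m$ keeps precisely the steps of level $\leq m$ and collapses the rest into equalities; this is exactly the chain produced by first projecting the step to use $(\pi_{m-1}(t_1), \pi_{m-1}(t'_1), d') \in \pi(\mathcal{D})$ and then unfolding. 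Composing across all steps of $\mathcal{C}^n$ yields the desired equality of sequences.

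The main obstacle I foresee is the non-canonicity of the unfolding procedure: the proof of Prop.~\ref{coincidence_finite_coinductive} makes implicit choices when factoring a coinductive step into nested distance steps (e.g., the order in which independent branches are modified). Strict diagrammatic equality therefore holds only after fixing a canonical unfolding, or, alternatively, after interpreting the diagram up to permutations of commuting distance steps, which is legitimate because steps acting on disjoint subtrees commute under rule $3$ of Def.~\ref{operational}. Once such a convention is pinned down, the rest of the argument is essentially bookkeeping on step levels, with Lemma~\ref{borrar_niveles} doing the bulk of the work.
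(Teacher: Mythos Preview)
The paper gives no explicit proof of this corollary; it is stated as an immediate consequence of the surrounding material (Prop.~\ref{ccd_projection}, Remark~\ref{projection_family}, Lemma~\ref{borrar_niveles}, and the proposition on projections of operational sequences), with the discussion that follows merely illustrating the phenomenon on Example~\ref{ainfinito_binfinito}. Your proposal is therefore not competing with a paper proof but rather supplying the details the authors omit, and the ingredients you invoke are exactly the ones the paper has in hand at this point.

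Your step-by-step verification of $\pi_m(\mathrm{unfold}(\mathcal{C}^n)) = \mathrm{unfold}(\pi_m(\mathcal{C}^n))$ via Lemma~\ref{borrar_niveles} is the natural way to make the claim precise, and your observation about the non-canonicity of the unfolding in Prop.~\ref{coincidence_finite_coinductive} is a genuine subtlety that the paper glosses over: the diagram only commutes on the nose once a deterministic unfolding convention is fixed (or, as you say, up to reordering of commuting steps). This is a fair point to flag, and it does not undermine the corollary as the authors intend it.
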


Note that all these coinductive sequences have exactly the ``same structure'', which is formalized by the fact that $\pi_m(\mathcal{C}^n)=\mathcal{C}^m$, for all $m\leq n$. As a consequence, if now we forget that we have already the original family $\mathcal{D}$ in Ex.~\ref{ainfinito_binfinito}, starting from the families $\mathcal{D}_n$ in Remark \ref{projection_family}, we could ``reverse'' the procedure by means of which they were defined, obtaining a single family $\mathcal{D}^\prime=\bigcup \mathcal{D}_n\cup\{(a^\infty,b^\infty,2)\}=\{(a^k,b^k,2)\mid k\leq \infty\}$, where we have added the ``limit'' triple $(a^\infty, b^\infty,2)$ because for all $k\in\mathds{N}$ $(\pi_k(a^\infty),\pi_k(b^\infty),2)=(a^k,b^k,2)\in \bigcup\mathcal{D}_n$. Now, we can see that $\mathcal{D}^\prime$ is indeed a $\frac{1}{2}$-ccd. In order to check the condition corresponding to $(a^\infty,b^\infty,2)$, we ``overlap'' the sequences $\mathcal{C}^n$ getting a sequence $\mathcal{C}^\prime$ constructed as follows: 
From the first level step $a^n=aa^{n-1}\equiv^{\mathcal{D}_n,1/2}_{1}ba^{n-1}$ at each $\mathcal{C}^n$, we obtain the first level step $a^\infty=aa^\infty\equiv^{\mathcal{D}^\prime,1/2}_{1}ba^\infty$; this is followed by the coinductive step $ba^\infty\equiv^{\mathcal{D}^\prime,1/2}_{\frac{1}{2} \cdot 2}bb^\infty=b^\infty$, obtained just removing the projections from any of the coinductive steps $ba^{n-1}=\pi_n(ba^{\infty})\equiv^{\mathcal{D}_n,1/2}_{\frac{1}{2}\cdot 2}\pi_n(bb^\infty)=\pi_n(b^\infty)=b^n$.

The important fact about the construction above is that it can be applied to any collection of coinductive sequences that ``match each other''. We will call these collections \emph{telescopic}, because when we ``unfold'' their elements we obtain a sequence of operational sequences, where any of them is obtained from the previous one by adding some new intermediate steps, that always correspond to transformation steps at the last level of the compared trees (this reminds us the ``opening'' of a telescopic antenna).

\begin{definition}
Let $t,t^\prime\in FyTrees(\mathds{A})$ and let $(\mathcal{S}^n)_{n \in \mathds{N}}$ be a collection of operational sequences proving $\pi_n(t)\rightsquigarrow_{\alpha,d}\pi_n(t^\prime)$. We say that it is \emph{telescopic}\footnote{Certainly, these ``telescopic'' sequences correspond to the notion of inverse limit in domain theory or category theory. But since we only need a very concrete case of that quite abstract notion, we prefer to define it in an explicit way here.} if $\pi_{m}(\mathcal{S}^n)=\mathcal{S}^m$, for all $m,n \in \mathbb N$, with $m\leq n$.
\end{definition}

Any telescopic collection $(\mathcal{S}^n)_{n \in \mathds{N}}$, produces a ``limit'' coinductive sequence $\mathcal{C}$ proving $t\equiv^\alpha_d t^\prime$:

\begin{lemma}\label{lemita}
Let $(\mathcal{S}^n)_{n\in\mathds{N}}$ be a telescopic collection relating $t$ and $t^\prime$. We consider the associated factorization of each sequence in it: $\mathcal{S}^n:=\pi_n(t)=t^{n,0,1}\rightsquigarrow_{\alpha,d_{n02}} t^{n,0,2}\rightsquigarrow^1_{\alpha,d_{n11}}t^{n,1,1}\dots\rightsquigarrow^1_{\alpha,d_{nk1}}t^{n,k,1}\rightsquigarrow_{\alpha,d_{nk2}} t^{n,k,2}=\pi_n(t^\prime)$ where we alternate distance steps at the first level and global steps, that aggregate a sequence of steps at deeper levels. Then, for all $m\leq n$, $j\in\{1,\dots,k\}$ and $r\in\{1,2\}$, we have $t^{m,j,r}=\pi_m(t^{n,j,r})$.
\end{lemma}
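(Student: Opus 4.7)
The plan is to prove the identity $t^{m,j,r}=\pi_m(t^{n,j,r})$ by induction on $j\in\{0,1,\dots,k\}$, treating $r=1,2$ jointly, and relying on three ingredients: the telescopic hypothesis $\pi_m(\mathcal{S}^n)=\mathcal{S}^m$; Lemma~\ref{borrar_niveles}, which says that a step of level $l$ projects to a step of the same level when $l\leq m$ and collapses to equality when $l>m$; and the projection-composition identity $\pi_m\circ\pi_n=\pi_m$ for $m\leq n$.

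For the base case $j=0$, $r=1$, we have by definition $t^{n,0,1}=\pi_n(t)$, so $\pi_m(t^{n,0,1})=\pi_m(\pi_n(t))=\pi_m(t)=t^{m,0,1}$. For the inductive step, assume $\pi_m(t^{n,j-1,2})=t^{m,j-1,2}$. The factorization of $\mathcal{S}^n$ provides a first-level step $t^{n,j-1,2}\rightsquigarrow^1_{\alpha,d_{nj1}}t^{n,j,1}$; by Lemma~\ref{borrar_niveles}, its projection $\pi_m(t^{n,j-1,2})\rightsquigarrow^1_{\alpha,d_{nj1}}\pi_m(t^{n,j,1})$ is still a first-level step and appears in the same relative position inside $\pi_m(\mathcal{S}^n)$. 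Because first-level steps remain first-level under projection while deeper steps either remain at a level $\geq 2$ or vanish, the positions of first-level steps in $\pi_m(\mathcal{S}^n)$ coincide with those in $\mathcal{S}^n$, and the telescopic equality $\pi_m(\mathcal{S}^n)=\mathcal{S}^m$ forces the two factorizations to have the same length $k$ and to match pointwise. This identifies $\pi_m(t^{n,j,1})$ with $t^{m,j,1}$. Repeating the argument for the subsequent aggregate $t^{n,j,1}\rightsquigarrow_{\alpha,d_{nj2}}t^{n,j,2}$, whose image under $\pi_m$ lies between the $j$-th and $(j{+}1)$-th first-level step of $\pi_m(\mathcal{S}^n)=\mathcal{S}^m$, yields $\pi_m(t^{n,j,2})=t^{m,j,2}$.

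The main delicate point, more a matter of bookkeeping than a genuine obstacle, is to verify that the factorizations of $\mathcal{S}^m$ and of $\pi_m(\mathcal{S}^n)$ really line up. If some aggregate $t^{n,j,1}\rightsquigarrow_{\alpha,d_{nj2}}t^{n,j,2}$ consists entirely of steps at levels strictly greater than $m$, then by Lemma~\ref{borrar_niveles} its projection collapses, so that $\pi_m(t^{n,j,1})=\pi_m(t^{n,j,2})$ and two first-level steps become adjacent in $\pi_m(\mathcal{S}^n)$; one must then read the canonical factorization of $\mathcal{S}^m$ as allowing empty (zero-cost) aggregates at the corresponding positions. Granted this convention, which is implicit in the statement of the lemma (the same index $k$ is used for every $n$), the induction above runs through cleanly and yields $t^{m,j,r}=\pi_m(t^{n,j,r})$ for all $j\in\{0,\dots,k\}$ and $r\in\{1,2\}$.
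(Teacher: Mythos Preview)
Your proposal is correct and follows the same idea the paper relies on; the paper's own proof is the single line ``Immediate, by definition of projections and telescopic collections.'' What you have written is a careful unpacking of that one-liner: you make explicit, via Lemma~\ref{borrar_niveles} and induction on $j$, that projection preserves first-level steps and hence preserves the factorization structure, so the telescopic identity $\pi_m(\mathcal{S}^n)=\mathcal{S}^m$ forces the intermediate trees to coincide. Your remark about allowing empty aggregates when all deeper steps collapse is exactly the bookkeeping the authors sweep under the rug by using the same index $k$ for every $n$.
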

\begin{proof}
Immediate, by definition of projections and telescopic collections.
\end{proof}

\begin{corollary}\label{cor}
For all $n\in\mathds{N}$, $j\in\{1, \dots, k\}$ and $r\in\{1,2\}$, we can decompose $t^{n,j,r}$ as above into $\sum_{i=1}^{I_{njr}} a_i t^{n,j,r}_i$, where the sequences $t^{n,j,1}\rightsquigarrow_{\alpha,d_{nj2}}t^{n,j,2}$ satisfy $I_{nj2}=I_{nj1}$ $\forall j\in\{1\dots k\}$,
and can be factorized into: $t^{n,j,1}=\sum a_it_i^{n,j,1}\rightsquigarrow_{\alpha,d_{nj2,1}}\sum a_it_i^{n,j,2,1}\dots\rightsquigarrow_{\alpha,d_{nj2,I_{nj2}}}\sum a_it_i^{n,j,2,I_{nj2}}=$ $\sum a_it^{n,j,2}_i=t^{n,j,2}$, where $t_i^{n,j,2,l}=t_i^{n,j,2}$ $\forall i\leq l$ and $t_i^{n,j,2,l}=t_i^{n,j,1}$ $\forall i>l$, and $\sum_{i=1}^{I_{nj2}} d_{nj2,i}=d_{nj2}$.

As a consequence, if we denote by $t^{\infty,j,r}$ the unique tree in $FyTree$ that satisfies $\pi_n(t^{\infty,j,r})=t^{n,j,r}$ $\forall n\in\mathds{N}$, we can decompose it into $\sum_{i=1}^{I_{j,r}}a_it_i^{\infty,j,r}$, and each collection $(\mathcal{S}^n_{j,i})_{n\in\mathds{N}}$ given by $\mathcal{S}^n_{j,i}:=\pi_n(t_i^{\infty,j,1})=t_i^{n,j,1}\rightsquigarrow_{\alpha,d_{nj2}}t_i^{n,j,2}=\pi_n(t^{\infty,j,2}_i)$ is indeed a telescopic collection relating $t_i^{\infty,j,1}$ and $t_i^{\infty,j,2}$.
\end{corollary}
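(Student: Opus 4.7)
The plan is to prove the three claims of the corollary in sequence, relying on Lemma~\ref{lemita} and on the inductive structure of the rules in Def.~\ref{operational}.

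First, I would show that $t^{n,j,1}$ and $t^{n,j,2}$ share a common first-level decomposition, so that $I_{nj1}=I_{nj2}$ and the corresponding actions $a_i$ match. By the factorization hypothesis, the segment $t^{n,j,1}\rightsquigarrow_{\alpha,d_{nj2}} t^{n,j,2}$ consists exclusively of atomic distance steps of level $\geq 2$. A short induction on the derivation of 1-steps in Def.~\ref{operational} shows that every level-$\geq 2$ 1-step has the shape $\sum a_i u_i \rightsquigarrow^1_{\alpha,d} \sum a_i u_i'$, where $u_i'=u_i$ for all but one index: rule~4 isolates a single first-level branch, while rule~3 keeps any additional branches as spectators, and rules~1 and~2 only produce level-$1$ steps. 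Hence the multiset of root-level pairs $(a_i,\cdot)$ is invariant along the whole segment, yielding the common decomposition $t^{n,j,r}=\sum_{i=1}^{I_{nj}} a_i t_i^{n,j,r}$ with $I_{nj1}=I_{nj2}=:I_{nj}$.

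Second, I would obtain the per-branch factorization by reordering the atomic 1-steps of the segment. By the shape above, two consecutive 1-steps acting on distinct first-level branches commute (each is a rule~3 spectator for the other, and the formal sum is unordered), so a bubble-sort-style permutation groups the steps into $I_{nj}$ blocks, the $i$-th block acting solely on branch~$i$. Taking $d_{nj2,i}$ to be the aggregated cost of the $i$-th block immediately yields $\sum_i d_{nj2,i}=d_{nj2}$, and the snapshots after the $l$-th block are precisely the intermediate trees $\sum a_i t_i^{n,j,2,l}$ described in the statement.

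Third, for the limit trees $t^{\infty,j,r}$ and the telescopic subtree collections, I would invoke Lemma~\ref{lemita}, which provides the compatibility $\pi_m(t^{n,j,r})=t^{m,j,r}$ needed to characterise $t^{\infty,j,r}$ as the unique finitary tree with $\pi_n(t^{\infty,j,r})=t^{n,j,r}$. Since first-level decomposition commutes with $\pi_n$ (for $n\geq 1$), the branch count $I_{nj}$ and the actions $a_i$ stabilise, giving $t^{\infty,j,r}=\sum_{i=1}^{I_{j,r}} a_i t_i^{\infty,j,r}$ together with the identities $\pi_n(t_i^{\infty,j,r})=t_i^{n,j,r}$ (understood up to the depth shift induced by taking a first-level subtree). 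Finally, that each collection $(\mathcal{S}^n_{j,i})_n$ is telescopic reduces to checking that projecting $\mathcal{S}^n$ along $\pi_m$ commutes with the reordering carried out in step~2. The main obstacle is precisely here: the reordering must be canonical enough that the per-branch blocks produced at level $n$ project to those produced at level $m$, since a naive choice at each level could break the telescopic property for the subtree collections. Fortunately, a canonical (for instance, lexicographic) choice of permutation transports coherently across $n$, because Lemma~\ref{borrar_niveles} shows that any atomic 1-step either survives unchanged or collapses to an identity under $\pi_m$, in both cases in a branch-preserving way, and the telescopic assumption on $(\mathcal{S}^n)_n$ guarantees the required coherence between the rearrangements at the various levels.
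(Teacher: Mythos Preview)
Your argument is correct and considerably more detailed than the paper's, which disposes of the corollary in a single sentence: ``Easy to check, by definition of projections and telescopic collections.'' What you do is essentially the natural unpacking of that sentence---show that level-$\geq 2$ steps fix the root decomposition, then reorder branchwise, then pass to the limit via Lemma~\ref{lemita}---so the route is the same in spirit, only made explicit.

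One point worth flagging: the subtlety you isolate in step~3 (that the branchwise reordering must be chosen coherently across all $n$ so that the per-branch subsequences $(\mathcal{S}^n_{j,i})_n$ themselves project correctly) is real and is not addressed by the paper at all. Your fix---a canonical ordering, carried by Lemma~\ref{borrar_niveles} and the telescopic hypothesis on $(\mathcal{S}^n)_n$---is the right idea; note that it works cleanly because the telescopic hypothesis means the level-$\geq 2$ steps at depth $n$ are exactly those at depth $n'$ for $n'\geq n$ plus some new ones at the deepest level, so the branch to which each surviving step belongs is stable under projection and no genuine choice of permutation is needed beyond fixing an enumeration of the first-level summands once and for all. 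In that sense your ``canonical choice'' is automatic rather than an extra ingredient, which is presumably why the paper feels entitled to omit it.
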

\begin{proof}
Easy to check, by definition of projections and telescopic collections.
\end{proof}

\begin{theorem}\label{proyecciones}
Let $\mathcal{D}=\{(t,t^\prime,d)\mid \exists$ $(\mathcal{S}^n)_{n\in \mathds{N}}$ a telescopic collection relating $t$ and $t^\prime\}$, then $\mathcal{D}$ is an $\alpha$-ccd.
\end{theorem}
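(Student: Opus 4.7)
The plan is to show, for an arbitrary $(t,t',d) \in \mathcal{D}$, that the witnessing telescopic collection $(\mathcal{S}^n)_{n \in \mathds{N}}$ can be \emph{folded} into a finite coinductive sequence $\mathcal{C}$ over the family $\mathcal{D}$ itself, with total cost at most $d$. This is the coinductive counterpart of the $\Leftarrow$ direction in the proof of Prop.~\ref{coincidence_finite_coinductive}, the subtlety being that we must now fold the whole family $(\mathcal{S}^n)_n$ \emph{uniformly} rather than a single sequence.

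First, I would apply Lemma~\ref{lemita} to factorize each $\mathcal{S}^n$ into alternating first-level steps of cost $d_{nj1}$ and intermediate blocks $t^{n,j,1} \rightsquigarrow_{\alpha, d_{nj2}} t^{n,j,2}$ aggregating steps at deeper levels. By Lemma~\ref{borrar_niveles}, $\pi_1(\mathcal{S}^n)$ consists precisely of the first-level steps of $\mathcal{S}^n$, and the telescopic identity $\pi_1(\mathcal{S}^n) = \mathcal{S}^1$ forces both the number and the cost $d_{j1} = d_{nj1}$ of these steps to be independent of $n$; in particular the list of blocks has a fixed finite length $k$, read off the finite sequence $\mathcal{S}^1$. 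I would then invoke Corollary~\ref{cor} to refine each block into child-indexed sub-sequences $\mathcal{S}^n_{j,i}$ witnessing $\pi_n(t_i^{\infty,j,1}) \rightsquigarrow_{\alpha, d_{nj2,i}/\alpha} \pi_n(t_i^{\infty,j,2})$.

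Next, I would define $\mathcal{C}$ by transcribing each first-level step of $\mathcal{S}^1$ as a first-level coinductive step (rule~1 or~2 of Def.~\ref{single_step}) keeping its cost $d_{j1}$, and by replacing each block with a finite sequence of $I_{j}$ coinductive steps (rule~3 of Def.~\ref{single_step}), where the $i$-th step invokes the triple $(t_i^{\infty,j,1}, t_i^{\infty,j,2}, d_{j,i}^*)$ with $d_{j,i}^* := \sup_n d_{nj2,i}/\alpha$, incurring coinductive cost $\alpha\, d_{j,i}^* = \sup_n d_{nj2,i}$. The crucial membership $(t_i^{\infty,j,1}, t_i^{\infty,j,2}, d_{j,i}^*) \in \mathcal{D}$ is witnessed by the telescopic collection $(\mathcal{S}^n_{j,i})_n$ supplied by Corollary~\ref{cor}, since each of its members proves a distance bounded by $d_{j,i}^*$.

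The remaining obligation is the cost bound $\sum_j d_{j1} + \sum_{j,i} \sup_n d_{nj2,i} \leq d$. Here I would exploit the monotonicity of $n \mapsto d_{nj2,i}$ (projection can only drop steps, never add them), together with the finiteness of the index ranges for $j$ and $i$ (coming from the telescopic condition at level~1 and the finitary assumption on the trees), to commute sup and sum, obtaining $\sum_{j,i} \sup_n d_{nj2,i} = \sup_n \sum_{j} d_{nj2} \leq d - \sum_j d_{j1}$. The main obstacle I anticipate lies not in this bookkeeping but in the clean verification that the sub-collections $(\mathcal{S}^n_{j,i})_n$ really are telescopic; one has to match the child-decomposition across all levels and check that projecting $\mathcal{S}^n_{j,i}$ to level $m$ indeed yields $\mathcal{S}^m_{j,i}$, which is the point where the interaction between the first-level structure of a block and the projection operation must be handled delicately.
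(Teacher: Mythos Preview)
Your approach is essentially the paper's: factorize each $\mathcal{S}^n$ via Lemma~\ref{lemita}, use the telescopic condition at level~$1$ to fix the first-level skeleton, invoke Corollary~\ref{cor} to obtain per-child telescopic sub-collections $(\mathcal{S}^n_{j,i})_n$, and assemble the limit coinductive sequence from the $t^{\infty,j,r}$. Your write-up is in fact more explicit than the paper's on two points: you spell out that each block $t^{\infty,j,1}\equiv_{d_{j2}} t^{\infty,j,2}$ is really $I_j$ separate rule-3 steps (the paper compresses this into a single symbol and only hints at the decomposition), and you justify the cost inequality by commuting $\sup_n$ with the finite sums via monotonicity, which the paper leaves implicit when it sets $d_{j2}=\sup_n d_{nj2}$. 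The ``main obstacle'' you flag---checking that the $(\mathcal{S}^n_{j,i})_n$ are themselves telescopic---is precisely the content of Corollary~\ref{cor}, so it is already discharged once you cite it; the only residual care needed is the off-by-one in the projection index when passing to subtrees, which both you and the paper gloss over.
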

\begin{proof}
Using the notation in Lem.~\ref{lemita} and Cor.~\ref{cor} above, we can consider the coinductive sequence
\begin{center}
$t=t^{\infty,0,1}\equiv^{\mathcal{D},\alpha}_{d_{02}} t^{\infty,0,2}\equiv^{\mathcal{D},\alpha}_{d_{11}}t^{\infty,1,1}\equiv^{\mathcal{D},\alpha}_{d_{12}}\dots\equiv^{\mathcal{D},\alpha}_{d_{k2}}t^{\infty,k,2}=t^\prime,$
\end{center}
 where $d_{j2}=sup_{n\in\mathds{N}}\{d_{nj2}\}$ and $d_{j1}=d_{nj1}$, $\forall n\in \mathds{N}$. It is clear that all the steps $t^{\infty,j,2}\equiv^{\mathcal{D},\alpha}_{d_{(j+1)1}}t^{\infty,j+1,1}$ are valid first level steps, while using Cor.~\ref{cor} we have that the steps $t^{\infty,j,1}\equiv^{\mathcal{D},\alpha}_{d_{j2}}t^{\infty,j,2}$ correspond to valid coinductive steps from $\mathcal{D}$.  This is so because joining all the members of the telescopic sequences $(\mathcal{S}^n_{j,i})_{n\in\mathds{N}}$ with $j\in\{0\dots k\}$ fixed, we obtain a single telescopic sequence $(\mathcal{S}^n_{j})_{n\in\mathds{N}}$ relating $t^{\infty,j,1}=\sum_{i=1}^{I_{j,1}}a_it_i^{\infty,j,1}$ and $t^{\infty,j,2}=\sum_{i=1}^{I_{j,2}}a_it_i^{\infty,j,2}$.
\end{proof}

If for any $t,t^\prime\in FyTrees$
there would be only finitely many sequences proving each valid triple $t\rightsquigarrow_{\alpha,d}t^\prime$, then a classical compactness technique (or K\"onig's lemma, if you prefer) would immediately prove that whenever we have $\pi_n(t)\equiv^{\alpha}_{d}\pi_n(t^\prime)$ for any $n\in\mathds{N}$, we can obtain a telescopic collection of sequences proving all these facts. Then, the application of Th.~\ref{proyecciones} would conclude the continuity of our global bisimulation distance.
Unfortunately, this is not the case, because we can arbitrary enlarge any such sequence, adding dummy steps that apply the idempotency rule in one and the other directions. Even more, in some complicated cases in order to obtain some distances between two trees we need to consider sequences that include intermediate trees wider than the compared ones. 

\begin{example}\label{ej1}
Let $\mathds{A}=\{1,2,3,4,5\}$ with the ``usual'' distance ${\bf d}(n,m)=|m-n|$, for all $m,n\in\mathds{A}$. Let us consider the trees $t=1(2+3+4+5)+1(1+2+3+4)$ and $t^\prime=1(1+2+4+5)$. Then, we have $t\rightsquigarrow_{1,3} t^\prime$, which can be obtained by means of the sequence $\mathcal{S}:=t\rightsquigarrow^{1}_{1,0}1(2+2+3+4+5)+1(1+2+3+4)\rightsquigarrow^{1}_{1,1} 1(1+2+3+4+5)+1(1+2+3+4)\rightsquigarrow^{1}_{1,0} 1(1+2+3+4+5)+1(1+2+3+4+4)\rightsquigarrow^{1}_{1,1} 1(1+2+3+4+5)+1(1+2+3+4+5)\rightsquigarrow^{1}_{1,0} 1(1+2+3+4+5)\rightsquigarrow^{1}_{1,1} 1(1+2+4+4+5)\rightsquigarrow^{1}_{1,0}1(1+2+4+5)=t^\prime$.
\end{example}

\begin{proposition}\label{prop_anchuras}
There exist $t,t^\prime\in FTrees(\mathds{A})$, with $t\rightsquigarrow_{\alpha,d} t^\prime$ and $||t||_m$, $||t^\prime||_m \leq k$ for some $m,k\in\mathds{N}$, for which it is not possible to obtain an operational sequence $\mathcal{S}$ witnessing $t\rightsquigarrow_{\alpha,d} t^\prime$ whose intermediate trees  $t^{i}$ satisfy $||t^i||_m\leq k$.
\end{proposition}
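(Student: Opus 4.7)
The plan is to instantiate the proposition on the data of Ex.~\ref{ej1}: take $\alpha=1$, $\mathds{A}=\{1,\ldots,5\}$ with ${\bf d}(n,m)=|n-m|$, $t=1(2+3+4+5)+1(1+2+3+4)$, $t^\prime=1(1+2+4+5)$, $m=2$, $k=4$, and $d=3$. Then $||t||_2=||t^\prime||_2=4$, and $t\rightsquigarrow_{1,3}t^\prime$ is witnessed by the sequence already exhibited in Ex.~\ref{ej1}, whose intermediate trees reach width $5$. On these data, the proposition amounts to the claim that \emph{every} operational sequence from $t$ to $t^\prime$ satisfying $||t^i||_2\leq 4$ has total cost strictly greater than $3$.

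I would first normalize the sequence so that no step of type $2$ modifies a level-$1$ action. Since the only level-$1$ label appearing in $t$ and $t^\prime$ is $1$, any change of a level-$1$ label from $1$ to some $b\neq 1$ must be matched by a later step restoring label $1$; each such pair contributes at least $2{\bf d}(1,b)\geq 2$ to the total cost and is irrelevant to the level-$2$ work. A cost-$\leq 3$ sequence thus contains no such pair, so every intermediate tree has the shape $\sum_{j\in J}1(Y_j)$, with $|J|\leq 4$ and each multiset of level-$2$ actions $Y_j$ satisfying $|Y_j|\leq 4$.

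Next I would use a ``thread'' decomposition: each occurrence of a level-$1$ subtree along the sequence is a thread, created at step $0$ or via duplication (rule $1$), destroyed by merging with another thread of identical value, and otherwise modified internally by level-$2$ operations. Since $t^\prime$ has a single level-$1$ child, only one thread survives, so in particular the threads issuing from $A=\{2,3,4,5\}$ and $B=\{1,2,3,4\}$ must eventually merge at some common multiset $X$ of width $\leq 4$. Internal evolutions of a single thread have cumulative cost at least the \emph{width-$4$-preserving multiset distance} $c_{w4}$, which satisfies the triangle inequality; auxiliary duplications of a single thread can only add nonnegative internal cost. Without loss of generality, therefore, the sequence splits into an $A\rightsquigarrow X$ evolution, a $B\rightsquigarrow X$ evolution (cost $\geq c_{w4}(A,X)+c_{w4}(B,X)$), and a post-merge $X\rightsquigarrow C=\{1,2,4,5\}$ evolution (cost $\geq c_{w4}(X,C)$).

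The combinatorial core is the inequality
\[
c_{w4}(A,X)+c_{w4}(B,X)+c_{w4}(X,C)\ \geq\ 4 \quad \text{for every multiset } X\subseteq\mathds{A},\ |X|\leq 4,
\]
with equality e.g.\ at $X=C$. This is verified by a finite case analysis on $X$: shrinking $X$ reduces the first two summands but forces the third to rebuild missing elements within the width-$4$ budget at higher cost than was saved, while shifting $X$ away from $C$ simply unbalances the sum. Combined with the thread decomposition, this yields total cost $\geq 4>3=d$, contradicting the hypothesis. The main obstacle I expect is the thread reduction itself: formalizing that duplication/idempotency patterns at level $0$ cannot leverage shared copies to beat the lower bound $4$ requires a routine but delicate rewriting that pushes all level-$0$ operations to the boundaries of each internal transformation while tracking the histories of merged copies.
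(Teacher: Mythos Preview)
Your approach is essentially the paper's: the same instance from Ex.~\ref{ej1}, the same reduction to ``the two level-$2$ multisets $A$ and $B$ must be merged at some common $X$ before reaching $C$'', and the same case analysis showing that every width-$\leq 4$ merge point forces total cost $\geq 4$. Your framing via a width-constrained distance $c_{w4}$ and the single inequality $c_{w4}(A,X)+c_{w4}(B,X)+c_{w4}(X,C)\geq 4$ is a cleaner packaging of what the paper does by enumerating the four specific merge sets $C'_1,\ldots,C'_4$ reachable at minimal unify-cost~$2$; but the content is the same, and both versions leave the ``only one merge matters'' reduction at the same informal level you flag at the end.
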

\begin{proof}
For the trees $t$, $t^\prime$ in Ex.~\ref{ej1}, there is no sequence $\mathcal{S}$ proving $t\rightsquigarrow_{1,3}^{1} t^\prime$ that only includes intermediate trees $t^{\prime\prime}$ with $||t^{\prime\prime}||_2\leq 4\,$.

We have to transform the sets $A=\{2,3,4,5\}$ and $B=\{1,2,3,4\}$ into $C=\{1,2,4,5\}$. First of all, we need to unify $A$ and $B$, otherwise if we try to get $C$ from $A$ and $B$ independently it will cost more than $3$ units. In order to unify $A$ and $B$, we need at least a $2$ units payment getting one of the following intermediate sets: $C_1^{\prime}=\{1,2,3,4,5\}$, $C_2^{\prime}=\{1,2,3,4\}$, $C_3^{\prime}=\{2,3,4,5\}$ or $C_4^{\prime}=\{2,3,4\}$. Now, we see that $C_1^\prime$ is the only intermediate set that gives us the $3$ units cost given in Ex.~\ref{ej1}.

\noindent\underline {$C_2^\prime\rightsquigarrow_{1,2}C:$} Pay $1$ unit to add $5$, and another to ``erase'' $3$, so that the total cost would be $4$.

\noindent\underline {$C_3^\prime\rightsquigarrow_{1,2}C:$} Pay $1$ unit to add $1$, and another to ``erase'' $3$, so that the total cost would be $4$ again. 

\noindent\underline {$C_4^\prime\rightsquigarrow_{1,3}C:$} Pay $1$ unit to add $1$, another to add $5$, and another to ``erase'' $3$; the total cost would be $5$.\end{proof}

Also, the following example shows us that in some cases the sequences at the telescopic sequences could be a bit involved. Even if for small values of $m$ the projections $\pi_m(t)$ and $\pi_m(t^\prime)$ could be ``quite close'', we need more elaborated sequences for relating them. Otherwise we would not be able to expand those sequences into others connecting $\pi_n(t)$ and $\pi_n(t^\prime)$, for larger values of $n$.

\begin{example}
We can check $ac+bd\rightsquigarrow_{1,2} ad+bc$ by using the sequence $\mathcal{S}:= ac+bd\rightsquigarrow^1_{1,1}bc+bd\rightsquigarrow^{1}_{1,1}bc+ad=ad+bc$, which corresponds to $\pi_1(\mathcal{S})=\mathcal{S}^1:= a$$+b\rightsquigarrow^1_{1,1}b$$+b\rightsquigarrow^1_{1,1} b$$+a=a+b$. Therefore, in this case, we cannot start with the trivial sequence $\mathcal{S}^{{\prime}1}:= a$$+b=b$$+a$, and moreover we need to take the order of ``summands'' into account, expressing the fact that we have really to change $a$ into $b$, and $b$ into $a$. 
\end{example}

\section{Some partial results looking for continuity}
Ex.~\ref{ej1} shows us that sometimes we need to use intermediate trees containing subtrees that are wider than the corresponding ones in the two compared trees. However, we expected to prove that the width of those subtrees would be bounded by some (simple) function of the width of the subtrees at the same depth of the compared trees. We tried a proof by induction on the depth of the trees, whose base case should correspond to depth $1$.

We consider $1$-depth trees $t=\sum_{i=1}^n a_i$ and $t^\prime=\sum_{j=1}^m b_j$,
and the corresponding multisets of labels $A=\{a_1,\dots, a_n\}$ and $B=\{b_1,\dots, b_m\}$. Any step in a sequence $t\rightsquigarrow_{\alpha,d}t^\prime$ is, of course, a first level step and corresponds to the application of either idempotency, in any direction, or the relabeling rule. We can represent these sequences by means of a multi-stage graph. We formally denote these graphs by $(S,T,l,v)$, where
\begin{inparaenum}[$(i)$]
\item $S$ is partitioned into $\{S_j\}_{j=0, \ldots, m}$,
\item $l\mid_{S_0}:S_0\rightarrow A$ and $l\mid_{S_m}:S_m\rightarrow B$ are bijections, and
\item arcs in $T$ are of the form $(t,u)$, with $t\in S_i$ and $u\in S_{i+1}$ for some $i$, and correspond to any of the patterns in Fig.~\ref{Pasos_nodos}, with the application of a single non-identity pattern at each stage.
\end{inparaenum}
Then, the full cost $d$ of the sequence is just the sum of the costs of all the arcs in the graph.

\begin{definition}
We say that a multi-stage graph is
\begin{compactenum}
\item \emph{Totally sides connecting} (tsc) if for all $s_0\in S_0$ there exists a path connecting it with some $s_m\in S_m$, and, symmetrically, for all $s_m\in S_m$ there exists a path connecting it with some $s_0\in S_0$.
\item \emph{Totally both ways connected} (tbwc) if it is tsc and for each $j\in \{1, \ldots, m-1\}$ and each $s\in S_j$ there exist two arcs $(s^\prime,s), (s,s^{\prime\prime})\in T$.
\end{compactenum}
\end{definition}

\begin{figure}[t]
\begin{center}
\scriptsize{
\begin{tikzpicture}[->,>=stealth',shorten >=1pt,auto]
\matrix [matrix of math nodes, column sep={1cm,between origins},row sep={0.5cm,between origins}]
{
\node(){};&\node (2) {c}; & \node (5) {c}; & \node () {};&\node(){};&\node(){};&\node(){};&\node(){};&\node(){};&\node(){};\\
\node(1){c};&\node () {}; & \node () {}; & \node (4) {c};&\node(){,};&\node(7){c};&\node(8){d};&\node(){,};&\node(9){c};&\node(10){c};\\
\node(){};&\node(3){c}; & \node (6) {c};& \node () {}; &\node(){};&\node(){};&\node(){};&\node(){};&\node(){};&\node(){};\\
\node(){}; & \node () {\hspace{1.1cm}Idempotency}; & \node () {};&\node(){};&\node(){};&\node(){};&\node(){\hspace{-.9cm}Relabeling};&\node(){};&\node(){};&\node(){};&\node(){\hspace{-3.3cm}Identity};\\
};

\begin{scope}[every node/.style={font=\scriptsize\itshape}]
\path [->]   
            (1) edge node[auto]{$0$}(2)
            (1) edge node[auto,swap]{$0$}(3)
            (5) edge node[auto]{$0$}(4)
            (6) edge node[auto,swap]{$0$}(4)
            (7) edge node[auto]{${\bf d}(c,d)$}(8)
            (9) edge node[auto]{0}(10);
\end{scope}
\end{tikzpicture}}
\end{center}
\vspace{-0.3cm}
\caption{Arcs of the graph representing the transformation of single level trees.}\label{Pasos_nodos}
\end{figure}
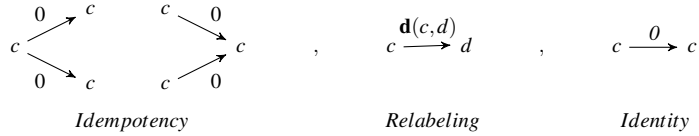

\begin{proposition}\label{cota_primerpiso}
Let $t=\sum_{i=1}^n a_i$ and $t^\prime=\sum_{j=1}^m b_j$ as above. Then, whenever we have $t\rightsquigarrow_{\alpha,d}t^\prime$, we can prove this by means of a sequence that has at most $3(m+n-2)+1$ distance steps. Consequently, the widths of the trees along such a sequence are also bounded by that amount.
\end{proposition}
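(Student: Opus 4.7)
The plan is to represent any witness sequence for $t\rightsquigarrow_{\alpha,d}t'$ as one of the multi-stage graphs introduced just before the proposition, convert it into a canonical \emph{split--relabel--merge} shape, and then count both the number of arcs (which equals the number of distance steps) and the maximum number of nodes per stage (which bounds the widths of the intermediate trees).

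First, I would \emph{prune} the graph by deleting every node that is not on any source-to-sink path, together with all its incident arcs: since all arc costs are non-negative, pruning cannot increase the total cost, and the result is a tbwc graph of cost at most $d$ realizing the same transformation. Then I would \emph{reorganize} the pruned graph so that, reading left to right, it consists of a phase containing only splits (left-to-right idempotency arcs), followed by a phase containing only relabelings, followed by a phase containing only merges. Operations acting on disjoint summands commute, and whenever a relabeling is followed by a split of its output, the split can be performed earlier and the two branches relabeled independently, with the triangle inequality of ${\bf d}$ ensuring that the total cost does not grow. Finally, I would read off the bipartite \emph{connection graph} $G$ on $\{a_1,\dots,a_n\}\cup\{b_1,\dots,b_m\}$, with an edge $(a_i,b_j)$ whenever the canonical form contains a relabeling of a copy of $a_i$ into $b_j$, and remove redundant edges from $G$ until it becomes a spanning forest on its vertex set, again without increasing the total cost.

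A bipartite spanning forest on $n+m$ vertices has at most $K\leq m+n-1$ edges, and the canonical sequence it produces has at most $K-n$ splits (source $a_i$ requires $\deg_G(a_i)-1$ splits), at most $K$ relabelings (one per edge, and fewer when the endpoints already share a label), and at most $K-m$ merges. The total number of steps is therefore bounded by $3K-n-m\leq 2(m+n)-3\leq 3(m+n-2)+1$, the last inequality being equivalent to $m+n\geq 2$, which always holds. Along this canonical sequence the width starts at $n$, grows by one with each split up to $K$, is left unchanged by relabelings, and shrinks back to $m$ during the merges; hence the maximum intermediate width is $K\leq m+n-1\leq 3(m+n-2)+1$, as claimed.

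The main obstacle is the cost-preserving character of both the reorganization and the forest reduction. When a relabeling $c\to c'$ is eventually followed by splits whose branches terminate at several distinct targets, converting the pattern into independent relabelings along each branch requires a delicate bookkeeping via the triangle inequality of ${\bf d}$ along the original paths in order to avoid any ``double counting''. Similarly, removing a redundant edge of $G$ corresponds to re-routing its ``flow'' through the remaining edges of the spanning forest, and one must verify that this re-routing never costs more than what was already paid for the removed edge; the triangle inequality of ${\bf d}$ plays the key role again. Once these two cost-preservation facts are established, the rest of the argument reduces to the purely combinatorial counting of edges in a bipartite spanning forest.
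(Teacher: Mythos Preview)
Your reorganisation into a split--relabel--merge canonical form does \emph{not} preserve the cost bound, and the triangle inequality cannot repair this. Take $\mathds{A}=\{a,b,c\}$ with ${\bf d}(a,b)=10$, ${\bf d}(b,c)=1$, ${\bf d}(a,c)=11$, and the depth-$1$ trees $t=a$, $t'=b+c$. The sequence
\[
a\;\rightsquigarrow^1_{\alpha,10}\;b\;\rightsquigarrow^1_{\alpha,0}\;b+b\;\rightsquigarrow^1_{\alpha,1}\;b+c
\]
witnesses $t\rightsquigarrow_{\alpha,11}t'$. However, any sequence of your canonical shape first produces several copies of $a$, then relabels each copy along some path starting at $a$, and finally merges equal labels. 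To terminate at $b+c$ at least one copy must end at $b$ (path cost $\geq{\bf d}(a,b)=10$) and another at $c$ (path cost $\geq{\bf d}(a,c)=11$, even if routed through $b$), so the cheapest canonical sequence costs at least $21$ and does \emph{not} witness $t\rightsquigarrow_{\alpha,11}t'$. Your bipartite connection graph is already the tree with edges $a\!-\!b$, $a\!-\!c$, so the forest reduction is of no help either.

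The phenomenon is structural: a relabel executed \emph{before} a split is paid once and shared by every descendant branch; pushing splits to the front forces each branch to pay that shared segment separately. The paper's proof avoids this by never reordering steps. It keeps the original multi-stage graph and only \emph{compacts} chosen sub-paths into single arcs, applying the triangle inequality along one path at a time (where it is sound). When the path attached to a new $a_i$ or $b_j$ first meets an already selected path, the meeting node is inserted into $\mathcal{H}$ and the existing arc is subdivided there, so shared relabeling prefixes remain shared. That is exactly the structure your canonical form destroys; your tighter step count $2(m+n)-3$ would be a genuine improvement if the reorganisation were cost-preserving, but it is not.
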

\begin{proof}
We consider the multi-stage graph $\mathcal{G}$ that represents the sequence $t\rightsquigarrow_{\alpha,d}t^\prime$. We observe that this graph is tsc, using this fact we obtain a ``compacted'' subgraph $\mathcal{H}$ (of the original graph $\mathcal{G}$) by selecting a subset of the nodes of $\mathcal{G}$ and disjoints paths connecting them. So that, ever node in $\mathcal{G}$-$\mathcal{H}$ is at most in one of theses paths.

We turn these paths into arcs of $\mathcal{H}$ whose cost is $d(c_i,c_j)$, where $c_i$, $c_j$ are the two extremes of the path. By applying the triangle inequality, we immediately obtain that this cost is no larger than the cost of the original path. Therefore, the full cost of $\mathcal{H}$ is no bigger than that of $\mathcal{G}$. The set of nodes in $\mathcal{H}$ is obtained in the following way:
\begin{compactitem}
\item First, we consider $a_1\in A$ and some reachable $b_{j1}\in B$ from it. We introduce both of them in $\mathcal{H}$, and also add the arc connecting them, as explained above.
\vspace{-0.15cm}
\begin{center}
\begin{tikzpicture}
\node at (-2,1.3) {$a^\prime_i$};
\node at (-2,1){$\bullet$};
\draw [dashed] (-1.8,1)--(-1.2,1);
\node at (-1,1.3) {$c^\prime_{ki}$};
\node at (-1,1) {$\bullet$};
\draw  [dashed](-0.8,1)--(-0.2,1);
\node at (0,1) {$\circ$};
\draw [dashed](0.2,1)--(0.8,1);
\node at (1,1.3) {$c_k$};
\node at (1,1){$\bullet$};
\draw [dashed](1.2,1)--(1.8,1);
\node at (2,1){$\circ$};
\node at (3,1){$\bullet$};
\node at (3,1.3) {$c^{\prime\prime}_{ki}$};
\draw [dashed] (2.2,1)--(2.8,1);
\node at (4,1){$\bullet$};
\node at (4,1.3){$b^\prime_{ji}$};
\draw [dashed](3.2,1)--(3.8,1);

\node at (-2,-0.2) {$a_i$};
\node at (-2,-0.5){$\bullet$};
\draw [dashed] (-1.9,-0.45)--(-0.1,0.45);
\node at (-1,0) {$\circ$};
\node at (0,0.5) {$\circ$};
\draw [dashed](0.1,0.55)--(0.9,0.95);
\end{tikzpicture}
\end{center}
\vspace{-0.15cm}
\item Next, we consider each of the remaining $a_i\in A$ and we select again some $b_{ji}$ with which it is connected. We take the path in $\mathcal{G}$ connecting them, and if it does not cross any path in $\mathcal{G}$ that generated an arc in $\mathcal{H}$, then we proceed as in the first case. Otherwise, we consider the first arc $\langle c^\prime_{ki}, c^{\prime\prime}_{ki}\rangle$ in $\mathcal{H}$ ``traversed'' by the new path. If $c_{ki}$ is the common node to the two involved paths, then we add it to $\mathcal{H}$ and remove the arc $\langle c^\prime_{ki}, c^{\prime\prime}_{ki}\rangle$ adding instead two new arcs $\langle c^\prime_{ki}, c_{ki}\rangle$ and $\langle c_{ki}, c^{\prime\prime}_{ki}\rangle$, together with the arc $\langle a_i, c_{ki}\rangle$.
\end{compactitem}
Finally, we proceed in the same way, but going backwards in the graph, for every $b_j\in B$ that was not still reached from any $a_i\in A$. Clearly, at any stage of the construction we add two new arcs in the worst case, and besides we need an idempotency step each time we consider a path that crosses $\mathcal{H}$. This finally produces a sequence from $t$ to $t^\prime$ with at most $3(m+n-2)+1$ steps, whose cost is not bigger than that of the original sequence.
Since the width of the trees change at most one at any step, the results about the bound of these widths follows immediately.
\end{proof}

The important thing about the bounds obtained above, is that they only depend on the cardinality of the multisets $A$ and $B$, but no at all on the properties of the domain of actions $(\mathds{A}, \bf{d})$. Even more, we can extend this result to any two finite trees $t=\sum_{a\in A} at_a$ and $t^\prime=\sum_{b\in B} bt_b$: the size and complexity of the ``continuations'' $t_a$ and $t_b$ do not compromise the bound on the number of first level steps of a sequence relating $t$ and $t^\prime$, that bound only depends on $||t||_1$ and $||t^\prime||_1$. 

\begin{proposition}\label{cota_total}
Let $t=\sum_{i=1}^n a_it_i$ and $t^\prime=\sum_{j=1}^m b_jt_j$ be two trees such that $t\equiv_{d}^{\alpha} t^\prime$. Then, we can prove this by means of a coinductive transformation sequence $\mathcal{C}$, that has at most $3(m+n-2)+1$ first level steps.
\end{proposition}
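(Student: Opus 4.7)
The plan is to generalize the multi-stage graph argument of Prop.~\ref{cota_primerpiso} to the setting of arbitrary (finitary) trees by attaching a full subtree to each graph node. Given a coinductive sequence $\mathcal{C}$ witnessing $t\equiv_d^\alpha t^\prime$, I would represent it as a multi-stage graph whose nodes carry pairs $(a,t_a)$ of a top-level action together with its associated continuation. According to Def.~\ref{single_step}, the non-identity arcs then split into three kinds: \emph{idempotency} arcs (duplicating or merging two equal pairs $(a,t_a)$, cost $0$), \emph{relabeling} arcs (replacing $a$ by $b$ while preserving $t_a$, cost ${\bf d}(a,b)$), and \emph{coinductive} arcs (replacing $t_a$ by $t'_a$ while preserving $a$, cost $\alpha d^\prime$ whenever $(t_a,t'_a,d^\prime)\in\mathcal{D}$). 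Only the first two types count as first level steps, while the coinductive arcs do not; and, as in the depth-$1$ case, the graph obtained from $\mathcal{C}$ is totally sides connecting.

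The central step is then the very same compacting procedure used in Prop.~\ref{cota_primerpiso}: select one node for each top-level summand of $t$ and of $t^\prime$, and replace every disjoint path between chosen endpoints by a single ``direct'' transition. Here, however, a direct transition from $(a,s)$ to $(b,s^\prime)$ will decompose into at most \emph{one} relabeling first level step $(a,s)\to(b,s)$ followed by at most \emph{one} coinductive (non first-level) step $(b,s)\to(b,s^\prime)$. This reorganization is legitimate because relabelings and coinductive steps act on independent components of the node labels and hence commute: all relabelings along the path can be pushed to the front and merged into a single step by the triangle inequality of ${\bf d}$, while all coinductive sub-steps can be deferred to the end and merged into one step by the triangular transitivity of $\equiv^\alpha$ recalled earlier in the paper. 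The total cost of each compacted transition therefore does not exceed the cost of the original path.

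Since each compacted arc contributes exactly one first level step, the counting of Prop.~\ref{cota_primerpiso} transfers verbatim: the initial arc contributes $1$; each of the remaining $n-1$ top-level summands of $t$ contributes at most $3$ additional first level steps (two coming from the splitting of an existing compacted arc into two, plus one idempotency step at the crossing node); and the symmetric bound holds for the $m-1$ top-level summands of $t^\prime$ not yet reached going backwards. This yields the desired bound $1+3(n-1)+3(m-1)=3(m+n-2)+1$, as in the depth-$1$ case.

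The main obstacle I foresee is to ensure that the merged coinductive steps correspond to genuinely valid triples in a single $\alpha$-ccd. The triangular transitivity of $\equiv^\alpha$ only guarantees the existence of an ad-hoc $\alpha$-ccd for each compacted subtree transformation, so one has to close the original family $\mathcal{D}$ under composition (adding the merged triples and the witnessing sub-families) in order to obtain a uniform family that supports all the compacted coinductive arcs simultaneously. A secondary subtlety is that the reordering of relabelings and coinductive sub-steps along a path must not disturb the bookkeeping of later idempotency steps, which may require an explicit argument that equal $(a,t_a)$ pairs remain equal after the reorganization.
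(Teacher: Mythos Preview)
Your proposal is correct and follows essentially the same line as the paper, but the paper packages the argument more economically and in a way that dissolves the two obstacles you flag.

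Instead of keeping relabeling arcs and coinductive arcs separate and then arguing that they commute, the paper observes that the proof of Prop.~\ref{cota_primerpiso} nowhere uses that ${\bf d}$ is a metric: it only needs a ternary relation $d(\cdot,\cdot,\cdot)$ satisfying reflexivity, symmetry, and triangular transitivity (``bounds for a distance''). The paper then takes as new alphabet the set of prefixed trees $\mathds{A}\times\mathit{Trees}$ and defines $d_\alpha(at,bt',e)$ to hold iff $e\geq {\bf d}(a,b)+\alpha e'$ for some $e'$ with $t\equiv^\alpha_{e'} t'$. Since $\equiv^\alpha$ already enjoys triangular transitivity, $d_\alpha$ is a ``bounds for a distance'' relation, and the original trees $t=\sum a_it_i$, $t'=\sum b_jt'_j$ become depth-$1$ trees over this alphabet. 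Applying (the generalized) Prop.~\ref{cota_primerpiso} verbatim gives the bound.

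This abstraction buys exactly what you were worried about. Your first obstacle (closing $\mathcal{D}$ under composition) evaporates because one works directly with the relation $\equiv^\alpha$, which is already closed under triangular transitivity; there is no need to build a single bespoke $\alpha$-ccd. Your second obstacle (reordering versus idempotency) also disappears: once a pair $(a,t_a)$ is treated as an atomic letter, there is no reordering to perform---a path in the multi-stage graph is compacted into a \emph{single} arc labeled by $d_\alpha$, not into a relabeling-then-coinductive pair, and idempotency of pairs is just idempotency in the new alphabet.

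So: same idea, same graph compaction, same count; the paper's extra step is to notice that bundling the action and its continuation into one ``super-letter'' lets Prop.~\ref{cota_primerpiso} be invoked as a black box, rather than reproved.
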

\begin{proof}
We observe that the result in Prop.~\ref{cota_primerpiso} could be obtained in exactly the same way if instead of a distance function on $A$, we would consider a relation $d\subseteq A\times A\times \mathds{R}^+$ that satisfies the properties that define ``bounds for a distance'' in $A$:
\begin{compactitem}
\item $\forall a\in A$ $d(a,a,d)$ $\forall d\in \mathds{R}^+$.
\item $d(a,b,d)$ $\Leftrightarrow$ $d(b,a,d)$.
\item $d(a,b,d_1) \wedge d(b,c,d_2)$ $\Rightarrow$ $d(a,c,d_1+d_2)$.
\end{compactitem}
Then, we consider the set of ``prefixed'' trees $\mathds{A}FTrees=\mathds{A}\times FTrees$, and we take $d_{\alpha}(at,bt^\prime,d_1+d_2)$ $\Leftrightarrow$ $d_1={\bf d}(a,b)\wedge t\rightsquigarrow_{\alpha,d_2/\alpha}t^\prime$. It is clear that for any distance $d$ on $A$, and any $\alpha \in \mathds{R}^`$, each such $d_\alpha$ defines bounds for a distance in $A$. Now, we can see each finite trees $t=\sum_{i=1}^n a_it_i$ as a $1$-depth trees $t=\sum_{i=1}^n\langle a_i,t_i\rangle$, for the alphabet $\mathds{A}FTrees$. For any $\alpha\in \mathds{R}^+$ we can consider the ``bound for a distance'' relation $d_\alpha$, and apply Prop.~\ref{cota_primerpiso} to conclude the proof.
\end{proof}

\subsection{An alternative proof of the bounds for the first level}
Even if the bounds obtained in Prop.~\ref{cota_primerpiso} are rather satisfactory we have been able to prove some tighter bounds by reducing the induced graph by means of local simplifications rules. We consider interesting to show this proof because they bring to light how we could proceed when transforming the trees all along the sequences (even if we have not been able to successfully use this ideas in order to fully prove the desired continuity result).


\begin{proposition} \label{prop:from_tsc_to_tbwc}
Any multi-stage graph that is tsc can be turned into a tbwc multi-stage graph
that is a subgraph of the original one.
\end{proposition}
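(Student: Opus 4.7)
The plan is to build $G'$ by restricting the original graph $G=(S,T,l,v)$ to the set of nodes and arcs that lie on at least one complete source-to-sink path, i.e., on some path $v_0, v_1, \dots, v_m$ with $v_i \in S_i$ for every $i$, $v_0 \in S_0$ and $v_m \in S_m$. Equivalently, one can view $G'$ as the fixpoint of the following iterative pruning: while there exists an intermediate node $s \in S_j$ with $1 \le j \le m-1$ that lacks either an incoming or an outgoing arc, delete $s$ together with all its incident arcs. Termination is immediate from the finiteness of $S$, and a routine argument shows that both formulations produce the same subgraph. The partition $\{S_j\}$ and the labellings $l\mid_{S_0}$, $l\mid_{S_m}$ restrict to $G'$ unchanged, because no node of $S_0$ or $S_m$ is ever removed.

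The first verification step is that $G'$ is still tsc. Given any $s_0 \in S_0$, the tsc hypothesis on $G$ yields a path $s_0 = v_0, v_1, \dots, v_m$ in $G$ with $v_m \in S_m$. By the defining property of $G'$, every $v_i$ and every arc $(v_i, v_{i+1})$ survives in $G'$, so the whole path is a witness that $s_0$ still reaches $S_m$ inside $G'$. The symmetric reasoning applies to any $s_m \in S_m$.

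The second verification step is the ``both ways'' condition. Let $s \in S_j$ with $1 \le j \le m-1$ be any node retained in $G'$. By construction $s$ lies on some source-to-sink path $w_0, \dots, w_{j-1}, s, w_{j+1}, \dots, w_m$ of $G$; the arcs $(w_{j-1}, s)$ and $(s, w_{j+1})$ belong to that same path, so they too are retained in $G'$, providing the required predecessor and successor of $s$.

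The delicate point one might fear in a purely iterative pruning argument is that cascading deletions could eventually strand a boundary node (some $s_0 \in S_0$ losing all its outgoing arcs, or some $s_m \in S_m$ losing all its incoming arcs), which would break tsc. The path-based description of $G'$ neatly disposes of this worry: every node and every arc that appears on a source-to-sink path is preserved by definition, and no further element is needed for tsc, so boundary nodes always keep at least the neighbours witnessed by the paths through them. This is the only place where some care is required, and once it is handled the two items in the definition of tbwc fall out directly from the construction.
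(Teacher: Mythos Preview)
Your proof is correct and follows essentially the same route as the paper: the paper's one-line argument is precisely your iterative pruning (``repeatedly remove any intermediate node which is not two ways connected, and \ldots\ the graph remains tsc''), and you simply spell out the invariant more carefully and add the convenient closed-form description of the fixpoint as the set of nodes and arcs lying on some complete $S_0$--$S_m$ path. The extra detail is welcome but does not constitute a different approach.
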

\begin{proof}
We repeatedly remove any intermediate node which is not two ways connected, and it is clear that after any such removal the graph remains tsc.
\end{proof}

\begin{proposition} \label{pr}
\begin{compactenum}
\item The multi-stage graph associated to a sequence proving that $A\equiv^{\alpha}_dB$ is tbwc.
\item \label{segundo} Any subgraph of such a multi-stage graph, obtained by the removal of some internal nodes, that still remains tbwc, can also be obtained from some sequence proving $A\equiv^{\alpha}_d B$,
which will be shorter or have the same length than the initial sequence.
\end{compactenum}
\end{proposition}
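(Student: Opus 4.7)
The plan is to establish the two claims in turn, using the local structure of the arc patterns in Fig.~\ref{Pasos_nodos} throughout.

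For the first claim I would directly verify both defining conditions of tbwc. Let $s\in S_j$ with $0<j<m$ be any internal node. At the step $j-1\to j$, exactly one of the non-identity patterns is instantiated, while every other node of $S_{j-1}$ and $S_j$ is paired off by an identity arc; in each of the four patterns (idempotency $c\to c+c$, its inverse $c+c\to c$, relabeling $c\to d$, and identity $c\to c$) every node appearing in the later stage is a target of some arc. Hence $s$ has at least one incoming arc, and by symmetry at least one outgoing arc, establishing the ``both ways connected'' condition. The tsc condition then follows by iteration: starting from any $s_0\in S_0$ and repeatedly following an outgoing arc, which is always available by the argument above, we obtain a path of length $m$ reaching some $s_m\in S_m$, and the symmetric argument gives backward reachability from every $s_m\in S_m$.

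For the second claim I would argue by induction on the number of internal nodes removed to obtain the tbwc subgraph $\mathcal{G}'$. The base case is vacuous. For the inductive step, consider the removal of a single internal node $s\in S_j$ such that the graph remains tbwc and examine the local effect on the two adjacent steps $j-1\to j$ and $j\to j+1$. If $s$ was only an endpoint of identity arcs, the step patterns are unchanged and $\mathcal{G}'$ still directly represents a valid sequence of the same length. If $s$ was one of the two duplicates in an idempotency pattern $c\to c+c$ at step $j-1\to j$ (respectively $c+c\to c$ at step $j\to j+1$), then after removal the residual arc between the remaining $c$ nodes is just an identity $c\to c$, so the entire step becomes trivial, the multisets at stages $j-1$ and $j$ (respectively $j$ and $j+1$) coincide in $\mathcal{G}'$, and one of those stages can be discarded, producing a sequence strictly shorter by one. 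Finally, $s$ cannot be the endpoint of a relabeling pattern $c\to d$, as its removal would orphan the other endpoint and violate tbwc. In either subcase the cost of the resulting sequence is at most that of the original, since we only drop non-negative arc weights; applying the induction hypothesis to the remaining $n-1$ removals completes the construction of a sequence of equal or smaller length witnessing $A\equiv_{d}^{\alpha}B$.

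The main obstacle I expect is controlling interactions between adjacent removals: when two removed internal nodes lie in consecutive stages, both surrounding steps may need to collapse simultaneously, and one must check that performing the collapses in one order versus another yields the same resulting sequence and the same length bound. Processing the removals in a canonical order (for instance, stage by stage, from left to right) should make the bookkeeping go through, but verifying that each intermediate subgraph along the induction remains tbwc, so the inductive hypothesis applies, is the delicate point that deserves the most attention in a detailed write-up.
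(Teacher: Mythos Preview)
Your proposal is correct and follows essentially the same line as the paper, whose proof is extremely terse: part (1) is dismissed as ``Trivial'' and part (2) is a single sentence observing that removing an internal node turns an idempotency application into an identity arc, so that stage can be deleted. The only methodological difference is that you frame part (2) as an induction on the number of removed nodes, which creates the ordering issue you correctly flag; the paper avoids this by arguing directly on the final subgraph---at each stage the surviving pattern is either the original one or degenerates to identities only, in which case the stage collapses---so there is no need to verify that intermediate subgraphs along a removal sequence remain tbwc.
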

\begin{proof}
(1) Trivial. (2) When removing an internal node we are possibly removing an application of idempotency that turns into a simple identity arc. In such a case, we can remove this stage of the sequence getting a shorter sequence still proving  $A\equiv^{\alpha}_d B$.
\end{proof}

\begin{remark}
The result in Prop.~\ref{pr}.\ref{segundo} is true, not only if we remove nodes that are not both ways connected (in fact, there is no such in a tbwc multi-stage graph!), but also if we remove any subset of intermediate nodes, as far as the graph remains tsc. This is what we next use in order to reduce the size of the graph.
\end{remark}

\begin{definition}
We say that a tbwc multi-stage graph is a \emph{diabolo} if there exists some stage  $i\in \{0\ldots m\}$ with $|S_i|=1$, such that:
\begin{inparaenum}[$(i)$]
\item for all $j<i$ and all $s_j\in S_j$, $|\{s^\prime \mid (s_j,s^{\prime})\in T\}|=1$, and
\item for all $j>i$ and all $s_j \in S_j$, $|\{s^\prime \mid (s^{\prime},s_j)\in T\}|=1$. We say that $s_i \in S_i$ is a center of the diabolo.
\end{inparaenum}
\end{definition}

\begin{figure} [t]
\begin{center}
\scriptsize{
\begin{tikzpicture}[->,>=stealth',shorten >=1pt,auto]
\matrix [matrix of math nodes, column sep={1cm,between origins},row sep={0.5cm,between origins}]
{
\node(1){\bullet};&\node () {}; & \node (4) {\bullet};&\node(){};&\node(7){\bullet};&\node(9){\bullet};&\node(){};&\node(13){\bullet};&\node(){};&\node(){};&\node(16){\bullet};\\
\node(){};&\node (2) {\bullet}; &\node(){};&\node(6){\bullet};&\node(){};&\node(){};&\node(11){\bullet};&\node(){};&\node(14){\bullet};&\node(15){\bullet};&\node(){};\\
\node(3){\bullet};&\node(){};&\node(5){\bullet}; &\node(){};&\node(8){\bullet};&\node(10){\bullet};&\node(){};&\node(12){\bullet};&\node(){};&\node(){};&\node(17){\bullet};\\
};

\begin{scope}[every node/.style={font=\scriptsize\itshape}]
\path [-]   
            (1) edge (2)
            (2) edge (3)
            (2) edge (4)
            (2) edge (5)
            (6) edge (7)
            (6) edge (8)
            (9) edge (11)
            (10) edge (11)
            (12) edge (14)
            (13) edge (14)
             (14) edge (15)
            (15) edge (16)
            (15) edge (17);
\end{scope}
\end{tikzpicture}}
\end{center}
\caption{Schematic examples of diabolos}
\end{figure}
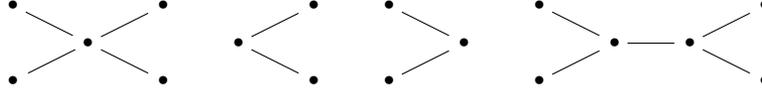

Next we prove that we can reduce any tbwc multi-stage graph into a disjoint union of diabolos. We will do it by removing some ``redundant'' arcs, in such a way that these removals do not destroy the \emph{tsc}
character of the graph. 

\begin{definition}
\begin{compactenum}
\item We say that a sequence of consecutive arcs \vspace{-0.2cm}$$(s_j,s_{j+1}), (s_{j+1},s_{j+2})\ldots (s_{j+k-1},s_{j+k})\vspace{-0.2cm}$$ in a tsc graph is a \emph{join sequence} if:
\begin{inparaenum}[$(i)$]
\item for all intermediate nodes in the sequence $s_{j+l}$ with $l \in \{1, \ldots, k-1\}$, the two arcs in the sequence, $(s_{j+l-1}, s_{j+l})$ and $(s_{j+l},s_{j+l+1})$ are the only ones in $T$ that involve $s_{j+l}$; and
\item there are at least two other arcs $(s_j, s^{\prime})$ and $(s^{\prime\prime},s_{j+l})$ in $T$ that are not in the sequence.
\end{inparaenum}
\item We say that a node $s_j\in S_j$ is left-reducible (resp.\ right-reducible) if it can only be reached from a single node $s_0\in S_0$ (resp. $s_m\in S_m$), but there are several arcs $(s^\prime_{j-1},s_j)\in T$ (resp. $(s_j,s^\prime_{j+1})\in T$).
\end{compactenum}
\begin{center}
\scriptsize{
\begin{tikzpicture}[->,>=stealth',shorten >=1pt,auto]
\matrix [matrix of math nodes, column sep={1cm,between origins},row sep={0.1cm,between origins}]
{
\node(){};&\node(){};&\node(){};&\node(){};&\node(){};&\node(){};&\node(){};\node(){};&\node () {}; & \node(){};&\node(4){\bullet};&\node(){};\\
\node(){};&\node () {}; & \node(){};&\node(){};&\node(){};&\node(){};&\node(){};&\node(){};&\node(){};&\node(){};&\node(){};&\node(){};&\node(){};\\
\node(){};&\node(8){\bullet};&\node(){};&\node(){};&\node(){};&\node(){};&\node(){};\node(){};&\node () {}; & \node(){};&\node(){};&\node(){};\\
\node(){};&\node(){};&\node(){};&\node(){};&\node(){};&\node(){};&\node(){};\node(){};&\node () {}; & \node(){};&\node(){};&\node(){};\\
\node(){};&\node () {}; & \node(){};&\node(){};&\node(){};&\node(){};&\node(){};&\node(){};&\node(){};&\node(){};&\node(){};&\node(){};&\node(){};\\
\node () {}; & \node(){};&\node(){};&\node(){};&\node(){};&\node(){};&\node(){};&\node(){};&\node(){};&\node(){};&\node(){};&\node(){};\\
\node(6){\bullet};&\node(){};&\node(){};&\node(){};&\node(){};&\node(){};&\node(){};\node(1){\bullet};&\node () {}; & \node(3){\bullet};&\node(){};&\node(5){\bullet};\\
\node(){};&\node(7){\bullet};&\node(){};&\node(){};&\node(){};&\node(){};&\node(){};\node(){};&\node () {}; & \node(){};&\node(){};&\node(){};\\
\node(){};&\node(){};&\node(8bajo){\bullet};&\node(){};&\node(){};&\node(){};&\node(){};\node(){};&\node () {}; & \node(){};&\node(){};&\node(){};\\
\node(){};&\node(){};&\node(){};&\node(9){\bullet};&\node(){};&\node(){};&\node(){};\node(){};&\node () {}; & \node(){};&\node(){};&\node(){};\\
\node(){};&\node(){};&\node(){};&\node(){};&\node(10){\bullet};&\node(){};&\node(){};\node(){};&\node () {}; & \node(){};&\node(){};&\node(){};\\
\node(){};&\node(){};&\node(){};&\node(){};&\node(){};&\node(){};&\node(){};\node(){};&\node () {}; & \node(){};&\node(){};&\node(){};\\
\node(){};&\node () {}; & \node(){};&\node(){};&\node(){};&\node(){};&\node(){};&\node(){};&\node(){};&\node(){};&\node(){};&\node(){};&\node(){};\\
\node(){};&\node(){};&\node(){};&\node(8bis){\bullet};&\node(){};&\node(){};&\node(){};\node(){};&\node () {}; & \node(){};&\node(4bis){\bullet};&\node(){};\\
\node(){};&\node () {}; & \node(){};&\node(){};&\node(){};&\node(){};&\node(){};&\node(){};&\node(){};&\node(){};&\node(){};&\node(){};&\node(){};\\
\node(){};&\node () {}; & \node(){};&\node(){};&\node(){};&\node(){};&\node(){};&\node(){};&\node(){};&\node(){};&\node(){};&\node(){};&\node(){};\\
\node(){};&\node () {}; & \node(){};&\node(){};&\node(){};&\node(){};&\node(){};&\node(){};&\node(){};&\node(){};&\node(){};&\node(){};&\node(){};\\
\node(){};&\node () {}; & \node(){};&\node(){};&\node(){};&\node(){};&\node(){};&\node(){};&\node(){};&\node(){};&\node(){};&\node(){};&\node(){};\\
\node(){};&\node(){\hspace{1.7cm}\mathit{Join\,\;Sequence}};&\node(){};&\node(){};&\node(){};&\node(){};\node(){};&\node () {}; & \node(){};&\node(){\hspace{-.5cm}\mathit{Left} \emph{-} \mathit{reducible\,\;Node}};&\node(){};\\
};

\begin{scope}[every node/.style={font=\scriptsize\itshape}]
\path [-]   
            (1) edge (3)
            (3) edge (4)
            (3) edge (4bis)
            (4) edge (5)
            (4bis) edge (5)
            (6) edge (8)
            (6) edge (7)
            (7) edge (8bajo)
            (8bajo) edge (9)
            (9) edge (10)
            (10) edge (8bis);
\end{scope}
\end{tikzpicture}}
\end{center}
\end{definition}

\begin{proposition} \label{eliminating_diamonds_joinsequence}
\begin{compactenum}
\item Whenever we have a join sequence in a tbwc graph, we can remove all its arcs and the intermediate nodes, preserving the tbwc property.
\item Whenever we have a left (resp.\ right)-reducible node $s_j$ in a tbwc, we can remove one of the arcs $(s^\prime_{j-1},s_j)$ reaching (resp. $(s_j,s^\prime_{j+1})$ leaving) the node, preserving the tsc property.
\end{compactenum}
\end{proposition}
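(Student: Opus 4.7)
The plan is to handle the two parts separately, exploiting in each case the observation that only a few designated nodes are affected by the removal, while the rest of the graph structure is untouched.

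For part (1), the join sequence to be removed consists of the arcs $(s_{j+l-1}, s_{j+l})$ for $l \in \{1, \ldots, k\}$ together with the intermediate nodes $s_{j+1}, \ldots, s_{j+k-1}$. Apart from the deleted interior nodes, only the two endpoints $s_j$ and $s_{j+k}$ lose an incident arc. However, by the very definition of join sequence, $s_j$ still has another outgoing arc $(s_j, s^\prime)$ and $s_{j+k}$ still has another incoming arc $(s^{\prime\prime}, s_{j+k})$, hence both remain two-ways connected. The removed intermediate nodes are no longer present, so the tbwc condition trivially does not concern them; every other node is entirely unaffected. I would conclude that the resulting graph is still tbwc.

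For part (2), consider the left-reducible case (the right-reducible one being symmetric). Let $s_0^\star \in S_0$ denote the unique node from which $s_j$ is reachable. A preliminary observation is that every $s^\prime_{j-1}$ with $(s^\prime_{j-1}, s_j) \in T$ is itself only reachable from $s_0^\star$, for otherwise another $S_0$-ancestor would witness the reachability of $s_j$ from an element of $S_0$ distinct from $s_0^\star$. Now pick any such arc $(s^\prime_{j-1}, s_j)$ and remove it. To verify that tsc survives, I would check the two boundary conditions separately. For $s_0 \in S_0$ with $s_0 \neq s_0^\star$, no path from $s_0$ to $S_m$ can use $s_j$ (by the uniqueness of $s_0^\star$), so no such path uses the removed arc either, and the original witness is preserved. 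For $s_0^\star$ itself, the hypothesis that several arcs enter $s_j$ guarantees that at least one such arc survives; thus $s_0^\star$ still reaches $s_j$, and any surviving outgoing path from $s_j$ yields a route to $S_m$. The symmetric argument then handles every $s_m \in S_m$.

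The main obstacle I anticipate is the subtle fact that removing the arc $(s^\prime_{j-1}, s_j)$ may leave $s^\prime_{j-1}$ without outgoing arcs, turning it into an internal dead end. This is precisely why the claim speaks of preserving tsc rather than tbwc: the tsc property only constrains boundary nodes, so internal dead ends are tolerated. The whole argument ultimately hinges on the fact that $s_j$ has multiple incoming arcs, so deleting one leaves its reachability from $s_0^\star$, and hence the reachability of everything downstream of $s_j$, intact.
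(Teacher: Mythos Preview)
Your proof is correct and follows the same approach as the paper's, which disposes of both parts in two or three sentences by pointing to the surviving ``lateral'' arcs at the endpoints of the join sequence and to the remaining incoming arc at the reducible node. Your version simply spells out the case analysis (in particular the role of the unique $s_0^\star$ and why only tsc, not tbwc, is preserved in part~(2)) that the paper leaves implicit.
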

\begin{proof}
(1) Clearly, after the removal the graph remains tbwc, due to the existence of the two ``lateral'' arcs $(s_j, s^{\prime}_{j+1})$ and $(s^\prime_{j+l-1},s_{j+l})$.
(2) It is clear that after the removal we have still another arc reaching (resp. leaving) the reducible node from the same side. This can be still used to reach the corresponding node $s_0\in S_0$ (resp. $s_m\in S_m$). And no other node in either $S_0$ and $S_m$ is affected by the removal.
\end{proof}

\begin{theorem}\label{reduce_to_collection_diabolo}
By removing some intermediate nodes we can reduce any tbwc multi-stage graph into another such graph (with smaller or equal total cost) which is the disjoint union of a collection of diabolos.
\end{theorem}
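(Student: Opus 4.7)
The plan is to set up an iterative reduction procedure that alternates the two local simplifications from Proposition~\ref{eliminating_diamonds_joinsequence}, and to argue that the fixed point of this procedure must be a disjoint union of diabolos. Concretely, starting from a tbwc graph $G$, I would repeatedly perform one of the following operations until none applies: (i) if $G$ contains a join sequence, remove all of its arcs together with its intermediate nodes, by Proposition~\ref{eliminating_diamonds_joinsequence}(1); (ii) otherwise, if $G$ contains a left- or right-reducible node, remove one of the redundant arcs incident to it, by Proposition~\ref{eliminating_diamonds_joinsequence}(2); (iii) after step (ii) the graph may only be tsc, so apply Proposition~\ref{prop:from_tsc_to_tbwc} to restore the tbwc property, discarding any intermediate nodes that are no longer two-ways connected.

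Termination is immediate: each step strictly decreases either the number of arcs or the number of intermediate nodes, and both quantities are bounded below by zero. By the cost statements in Propositions~\ref{eliminating_diamonds_joinsequence} and~\ref{prop:from_tsc_to_tbwc}, no step increases the total cost of the graph, so the resulting normal form has cost at most that of the original. It remains to show that a tbwc graph to which neither simplification rule applies is a disjoint union of diabolos.

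For this final step I would argue connected-component by connected-component. Let $C$ be one such component and let $C_j=C\cap S_j$. The key claim is that some stage $i$ satisfies $|C_i|=1$: picking a stage of minimal width and a node $s$ at that stage, the fact that the graph is tbwc guarantees that $s$ has both incoming and outgoing arcs, while the absence of reducible nodes and of join sequences should force any alternative ``parallel'' branch coexisting with $s$ to produce either a configuration that is left/right-reducible or a chain of degree-two nodes between two branching points, contradicting irreducibility. Once such a center $s_i$ is exhibited, I would show by induction on the distance to $i$ that for $j<i$ every node of $C_j$ has out-degree one, and symmetrically for $j>i$: a node of out-degree greater than one strictly to the left of the center would, using the uniqueness of $s_i$ as a bottleneck, have to be left-reducible or sit at the top of a join sequence, neither of which is allowed in the normal form.

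The main obstacle, and the reason we present only a plan, is the existence of the width-one bottleneck in each component. The interaction between the two irreducibility conditions is subtle: one can easily write down small tbwc graphs that visibly fail to be diabolos yet appear at first glance to contain neither a join sequence nor a reducible node, and a careful case analysis (perhaps augmented with additional \emph{ad hoc} local simplifications, each of which would itself need to be checked to preserve both the tsc property and the total cost) seems to be unavoidable. Producing such an airtight combinatorial argument is exactly the kind of apparently routine but stubbornly resistant step that this paper reports as the root of its difficulties.
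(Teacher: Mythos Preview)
Your reduction procedure and termination argument coincide with the paper's. The divergence, and the source of your acknowledged gap, is the order in which you try to certify that an irreducible tbwc graph is a union of diabolos: you look for the center first, by seeking a stage of minimal width and arguing it must be a singleton, and only afterwards derive the out-degree and in-degree conditions. The paper proceeds in the opposite direction, starting from the boundary stages $S_0$ and $S_m$ and pushing inward, so that the center emerges as a by-product rather than as a hypothesis to be verified.

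Concretely, fix a connected component $C$ of the normal form. If $|C\cap S_0|=1$, the paper observes that $C$ is already a (right-degenerate) diabolo with center at stage $0$: any later node reachable by two distinct incoming arcs would be left-reducible, since everything on its left ultimately funnels through the single node in $S_0$. If $|C\cap S_0|=k>1$, then each node of $C\cap S_0$ must have out-degree exactly $1$: a node of out-degree $\geq 2$, combined with the presence of another node in $C\cap S_0$ in the same connected component, forces a join sequence. The same reasoning then propagates stage by stage until the component narrows to a single node at some stage $i$; the symmetric argument from $S_m$ supplies the other half of the diabolo. In short, the paper \emph{derives} the existence of the width-one bottleneck from the out-degree analysis at the boundary, which is exactly the step you flagged as the obstacle when trying to establish it directly.
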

\begin{proof}
We start by reducing the graph applying Prop.~\ref{eliminating_diamonds_joinsequence}, until we cannot apply it anymore. We obtain a tsc graph
connecting the same sets of nodes $S_0$ and $S_m$, which can be turned into another tbwc (smaller) by applying Prop.\ref{prop:from_tsc_to_tbwc}. By abuse of notation, let us still denote by $S_1,\ldots, S_{m-1}$ the other stages of the graph.
\begin{compactitem}
\item First, we consider the connected components of the graph containing
  exactly one node $s_0$ belonging to $S_0$.
  These components are right-degenerated diabolos with $s_0$ as center.
  Indeed, if this was not the case, then the component would still contain a
  left-reducible node.
\item Now, let us consider a connected component containing a subset of nodes $\{s^1_0,\ldots, s^k_0\}\subseteq S_0$, with $k>1$.
It is not difficult to verify that for each $s^i_0$ ($1 \leq i \leq k$),
there is exactly one arc leaving $s_0^i$: otherwise, the connected component
would still contain a join sequence (due to the presence of multiple nodes in
$S_0$).
Using a similar argument, it is possible to prove that there can be only a single arc leaving each node in the following stages as well, until a stage $i$ is reached such that the component contains exactly one node $s_i\in S_i$.
Reasoning in a symmetric way from the right side (set $S_m$), it is eventually
possible to show that the considered component is a diabolo.
%
\end{compactitem}
Therefore, after the reduction, the graph is indeed the union of a family of disjoint diabolos.
\end{proof}

\begin{proposition}\label{size_diabolo}
\begin{compactenum}
\item \label{primero} A diabolo connecting two sets of nodes $S_0$ and $S_m$ satisfies $|S_i|\leq max\{|S_0|,|S_m|\}$ for each $i$.
\item Any disjoint union of diabolos connecting $S_0$ and $S_m$ satisfies $|S_i|\leq |S_0|+|S_m|$  for each $i$.
\item If the multi-stage graph corresponding to a sequence proving $A\equiv^\alpha_d B$ is a diabolo, then we can obtain another such sequence whose length will be at most $3(|A|+|B|-2)+1$.
\end{compactenum}
\end{proposition}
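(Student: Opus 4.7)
The plan is to handle the three parts in order, exploiting the rigid shape of a diabolo for the first two and adding a counting argument for the third.

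For part (1), I would use the two defining conditions of a diabolo to argue that the stage sizes are monotone around the center. For every $j<i_0$ the condition that each $s_j \in S_j$ has a unique outgoing arc defines a function $f_j \colon S_j \to S_{j+1}$; the tbwc property forces every node of $S_{j+1}$ to be reached by at least one arc, so $f_j$ is surjective and hence $|S_{j+1}| \leq |S_j|$. Iterating yields $|S_j| \leq |S_0|$ for $j \leq i_0$. The symmetric argument on the right half (using the uniqueness of the incoming arc in place of the outgoing one) gives $|S_j| \leq |S_m|$ for $j \geq i_0$, and together these imply the stated bound.

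For part (2), I would decompose $G = \bigsqcup_\ell D_\ell$ into its diabolo components and consider the induced partition $S_j = \bigsqcup_\ell S_j^{(\ell)}$. Applying part (1) to each $D_\ell$ gives $|S_j^{(\ell)}| \leq \max\{|S_0^{(\ell)}|,|S_m^{(\ell)}|\} \leq |S_0^{(\ell)}|+|S_m^{(\ell)}|$; summing over $\ell$ yields the desired inequality $|S_j|\leq |S_0|+|S_m|$.

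For part (3), the plan is to classify and count the stages according to the operation they carry. In the left half, only merges and relabelings can occur, since a split would introduce a second outgoing arc violating the diabolo condition; symmetrically, only splits and relabelings appear on the right. Since the merges must reduce $|A|$ initial nodes to a single center node, there are exactly $|A|-1$ of them, and by symmetry there are $|B|-1$ splits on the right. The hard part is bounding the number of relabeling stages. I would argue that in a reduced sequence two consecutive relabelings along the same lineage can be collapsed into a single one by the triangle inequality (the same trick used in the proof of Prop.~\ref{cota_primerpiso}), so every relabeling must be separated from the next by a merge or split. Charging at most two relabelings to each merge (one for each of the two lineages being unified) and at most two to each split, plus possibly one relabeling of the center itself, yields $3(|A|-1)+3(|B|-1)+1 = 3(|A|+|B|-2)+1$. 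The main obstacle is turning this informal ``charging'' into a precise statement about the reduced graph: one must verify that every sequence of relabelings along a lineage can really be combined without creating new violations of the diabolo structure or pushing the total cost above $d$, and keep track of which relabelings belong to which side of the center.
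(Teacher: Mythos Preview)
Your proposal is correct and follows essentially the same line as the paper's proof: part~(1) is argued via the monotonicity of stage sizes around the center (the paper just calls this ``obvious''), part~(2) by summing the componentwise bounds exactly as you do, and part~(3) by collapsing adjacent relabelings via the triangle inequality and then charging at most two relabelings to each of the $|A|+|B|-2$ idempotency steps plus one at the center. Your ``charging'' worry is not a real gap---the paper's argument is at the same level of informality and rests on the same observation that after collapsing, no lineage carries two consecutive relabelings without an intervening idempotency.
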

\begin{proof}
(1) Obvious. (2) We could think that this is an immediate consequence of \ref{primero}, but this is not always the case. Let us consider the disjoint union $S$ of two three-stages (degenerated) diabolos $S^1$ and $S^2$, with $|S^1_0|=1=|S^2_2|$; $|S^2_0|=8=|S^1_2|$; $|S^1_1|=5=|S^2_1|$. Then we have $S_0=S^1_0\cup S^2_0$, $S_1=S^1_1\cup S^2_1$, $S_2=S^1_2\cup S^2_2$ and therefore $|S_1|=10$, but $|S_0|=|S_2|=9$. As a consequence, the result would be wrong if we put $max$ instead of $+$. However, what we asserted is true in general: Let $S$ the disjoint union of a family of diabolos $S^j$ with $j\in\{1..k\}$, then we have $S_i=\bigcup_{j=1}^k S^j_i$, and as a consequence of \ref{primero}, $|S^j_i|\leq |S^j_0|+|S^j_m|$, and therefore $|S_i|=\sum_{j=1}^k |S^j_i|\leq \sum_{j=1}^k |S^j_0|+\sum_{j=1}^k |S^j_m|= |S_0|+|S_m|$.
(3) Whenever we have two relabeling steps at the same place, with no idempotency steps between them affecting that place, we can join them into a single relabeling step, without increasing the cost of the full transformation, because of triangular transitivity. As a consequence, we would have at most two relabeling steps for each idempotency step, and we have exactly $|A|+|B|-2$ idempotency steps at each diabolo. So we have, at the moment, at most $3(|A|+|B|-2)$ steps; possibly, we will need only one more relabeling step at the center of the diabolo. This gives us the bound $3(|A|+|B|-2)+1$.
\end{proof}

\begin{corollary} \label{size}
If we can prove $A\equiv^\alpha_d B$, then we can do it by means of a sequence whose intermediate sets satisfy $|S_i|\leq |A|+|B|$, and has at most $3(|A|+|B|-2)+1$ steps.
\end{corollary}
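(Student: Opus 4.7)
The plan is to stitch together the reduction to diabolos from Theorem~\ref{reduce_to_collection_diabolo} with the size estimates from Proposition~\ref{size_diabolo}. Starting from any sequence witnessing $A\equiv^\alpha_d B$, I would first build its associated multi-stage graph $\mathcal{G}$, which is tbwc by Proposition~\ref{pr}.1. Theorem~\ref{reduce_to_collection_diabolo} then yields a subgraph $\mathcal{H}$ that is a disjoint union of diabolos $\mathcal{H}_1,\ldots,\mathcal{H}_k$ (for some $k\geq 1$) with total cost no larger than that of $\mathcal{G}$, and Proposition~\ref{pr}.\ref{segundo} (together with the Remark following it) guarantees that $\mathcal{H}$ still encodes a valid sequence proving $A\equiv^\alpha_d B$.

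The intermediate width bound would then follow immediately from Proposition~\ref{size_diabolo}.2 applied to $\mathcal{H}$: $|S_i|\leq |S_0|+|S_m|=|A|+|B|$ for every stage $i$. For the length bound, I would observe that the endpoints of the diabolos induce partitions $A=\biguplus_j A_j$ and $B=\biguplus_j B_j$. Applying Proposition~\ref{size_diabolo}.3 to each diabolo individually produces a local sequence of length at most $3(|A_j|+|B_j|-2)+1$. Since distinct diabolos act on disjoint sets of nodes, their operations commute and can be linearised into a single sequence whose total number of steps is at most
\[
\sum_{j=1}^{k}\bigl[3(|A_j|+|B_j|-2)+1\bigr] \;=\; 3(|A|+|B|)-5k \;\leq\; 3(|A|+|B|-2)+1,
\]
where the last inequality uses $k\geq 1$ (and is tight precisely when the reduction ends in a single diabolo).

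The main point that requires care is verifying that both bounds hold simultaneously for the \emph{same} linearised sequence. Once the per-diabolo compact sequences are interleaved, the width at each intermediate stage is the sum of the current widths within the various diabolos. By Proposition~\ref{size_diabolo}.\ref{primero}, the local width inside each $\mathcal{H}_j$ stays bounded by $\max\{|A_j|,|B_j|\}$ throughout its sub-sequence, so the global width is bounded by $\sum_j\max\{|A_j|,|B_j|\}\leq\sum_j(|A_j|+|B_j|)=|A|+|B|$, independently of the interleaving order. Since the cost is preserved at each of these steps (cost does not increase in the diabolo reduction, nor in the per-diabolo compactification, which relies on triangular transitivity), the resulting sequence is still a witness of $A\equiv^\alpha_d B$ and simultaneously realises both the width and the length bounds.
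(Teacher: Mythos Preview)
Your proposal is correct and follows the same route as the paper's proof, which simply says to apply Theorem~\ref{reduce_to_collection_diabolo} and then Proposition~\ref{size_diabolo}. You have in fact been more careful than the paper: Proposition~\ref{size_diabolo}.3 is stated only for a single diabolo, and the paper's one-line proof does not spell out the summation argument over the $k$ components or the check that the width bound survives linearisation, both of which you handle explicitly and correctly.
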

\begin{proof}
We apply Th.~\ref{reduce_to_collection_diabolo} to reduce the multi-stage graph corresponding to the sequence proving $A\equiv^\alpha_d B$. Then apply Prop.~\ref{size_diabolo} to get the bounds for the values $|S_i|$, and that for the number of steps.
\end{proof}

\subsection{Now we go down into the second level}
Once we have the base case of an inductive proof, we would like to proceed with the inductive case. When we thought that we had it, we decided to present first the particular case of the trees with only two levels, because its (bigger) simplicity would help the readers to understand the quite involved proof for the general case. Next we present the proof for this particular case.
Starting from $p=\sum_{i\in I}a_ip_i$ and $q=\sum_{j\in J}b_jq_j$, once we have proved our Cor.~\ref{size}, we can assume that the first level steps in the sequences proving $\pi_n(p)\equiv^\alpha_d \pi_n(q)$ are always the same and satisfy the bounds in the statement of Th.~\ref{reduce_to_collection_diabolo}. In order to study in detail the second level steps in these sequences, we need to see how the summands $p_i$ evolve along those sequences. 

Once we had some bound for the width of any process $p^\prime$ obtained by the evolution of the summands $p_i$, and another one for the length of the subsequences producing their evolution, adding all these bounds and that corresponding to the first level, we would obtain the bound for the two first levels together. 
The following Prop.~\ref{bounding_sequence} proves a preliminary result. It says that whenever we have a sequence proving $p\equiv^\alpha_d q$ with a ``limited number'' of first level steps, but $q$ contains summands $aq_i$ where $||q_i||$  is ``very large'', then we can prune these summands, getting some  $q^\prime$ for which $p\equiv^\alpha_d q^\prime$ can be proved by means of a sequence that only contains processes $p^i$ for which $||p^i||_2$ is ``moderately large'', and for any process $q^{\prime\prime}$ ``between'' $q^\prime$ and $q$ (that means that $q^{\prime\prime}$ can be obtained by adding some branches to some subprocesses of $q^\prime$, and also $q$ can be obtained from $q^{\prime\prime}$ in this way) we also have $p\equiv^\alpha_d q^{\prime\prime}$. Again, in order to make easier the comprehension of the proposition, we first present a lemma that covers the particular case corresponding to intermediate stages of a sequence connecting two processes $p$ and $q$ with $||p||$ and $||q||$ ``small''.

\begin{lemma}\label{lem}
For any intermediate process $p^i$ in the sequence $\mathcal{C}$ proving $p=\sum_{i=1}^n a_ip_i\equiv^\alpha_d q=\sum_{j=1}^m b_jq_j$, we can decompose $p^i$ into $p^i=p^{i1}+p^{i2}$ in such a way that $||p^{i1}||\leq ||p||+||q||$, and we have a sequence $\mathcal{C}^\prime:=p\equiv^\alpha_{d_1}p^{i1}\equiv^\alpha_{d_2} q$ with $d=d_1+d_2$, which has at most $3(|I|+|J|-2)+1$ first level steps, and only uses intermediate processes $r=\sum_{k\in K} c_kr_k$ with $||r||\leq ||p||+||q||$.\\
Moreover, for any decomposition $p^{i2}=p^{i3}+p^{i4}$, we can obtain a sequence $\mathcal{C}^{\prime\prime}:=p\equiv^\alpha_{d_1}p^{i1}+p^{i3}\equiv^\alpha_{d_2} q$ which has at most $(3(|I|+|J|-2)+1)*(||p^{i3}||+1)$ first level steps, and only uses intermediate processes $r=\sum_{k\in K} c_kr_k$, with $||r||\leq ||p||+||q||+||p^{i3}||$.
\end{lemma}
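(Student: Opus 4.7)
My plan is to leverage Cor.~\ref{size} applied to the sequence $\mathcal{C}$. First I would associate to $\mathcal{C}$ the multi-stage graph $\mathcal{G}$ whose stages $S_k$ represent the first-level summands of the intermediate processes, and then apply the diabolo reduction of Th.~\ref{reduce_to_collection_diabolo} to produce a reduced graph $\mathcal{G}'$ connecting the same endpoints $S_0$ (summands of $p$) and $S_m$ (summands of $q$), with $|S_k'|\leq |I|+|J|=||p||+||q||$ at every stage and at most $3(|I|+|J|-2)+1$ transition steps in total, and with cost no larger than that of $\mathcal{C}$. This reduction preserves the endpoints and so still witnesses $p\equiv^\alpha_d q$.

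For the first part of the statement, I would define $p^{i1}$ as the sum of those summands of $p^i$ which are represented by nodes that survive in the stage $S_i'$ of the reduced graph, and $p^{i2}$ as the sum of the remaining summands. The width bound $||p^{i1}||\leq ||p||+||q||$ is then immediate from $|S_i'|\leq |I|+|J|$. Reading off the reduced graph from stage $0$ to stage $i$ yields $p\equiv^\alpha_{d_1}p^{i1}$, and reading from stage $i$ to stage $m$ yields $p^{i1}\equiv^\alpha_{d_2}q$, with $d_1+d_2\leq d$; the total first-level step count and the intermediate widths are automatically within the announced bounds.

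For the second part, given any decomposition $p^{i2}=p^{i3}+p^{i4}$, I would enumerate the summands of $p^{i3}$ as $c_1s_1,\dots,c_rs_r$, with $r=||p^{i3}||$, and locate, for each of them, the corresponding node $v_k$ at stage $S_i$ of $\mathcal{G}$ that was pruned when building $\mathcal{G}'$. For each $k$ I would graft a ``side thread'' onto $\mathcal{G}'$ that passes through $v_k$ at stage $i$: this thread is born (via idempotency from some surviving summand, or from a suitable node of $S_0$) at some earlier stage, carries $c_ks_k$ through stage $i$, and dies symmetrically afterwards. Applying the same diabolo reduction to each side thread in isolation bounds its length by $3(|I|+|J|-2)+1$ first-level steps and adds at most one extra summand to the width of every intermediate process it traverses. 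Adding up the contribution of the $r$ threads to that of the base sequence $\mathcal{C}'$ yields at most $(3(|I|+|J|-2)+1)(||p^{i3}||+1)$ first-level steps and widths bounded by $||p||+||q||+||p^{i3}||$.

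The main obstacle is making the grafting step rigorous: a summand in $p^{i3}$ need not coincide with any summand of $p$ or of $q$, and may have been created mid-sequence via idempotency and subsequently altered by relabeling. One therefore has to argue that each such summand admits a bounded-length creation-and-destruction biography, anchored at existing summands of the reduced sequence, and that all these biographies can be composed in parallel with $\mathcal{G}'$ without entangling with one another and without inflating either the cost, the width, or the step count beyond the announced bounds.
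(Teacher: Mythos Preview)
Your approach is essentially the same as the paper's: the first part is exactly the diabolo reduction of Th.~\ref{reduce_to_collection_diabolo} read at the level of summands, and the second part adds one extra thread per summand of $p^{i3}$.

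The only point worth noting is that the ``main obstacle'' you flag is not really an obstacle, and the paper dispatches it in one line. You do not need to anchor the side threads at surviving summands of the \emph{reduced} sequence $\mathcal{C}'$; you anchor them directly at $S_0$ and $S_m$, which are the same in $\mathcal{G}$ and in $\mathcal{G}'$. Since the \emph{original} multi-stage graph $\mathcal{G}$ is tbwc (Prop.~\ref{pr}.1), every removed node $v_k$ at stage $i$ already sits on a path in $\mathcal{G}$ from some $s_0\in S_0$ to some $s_m\in S_m$. That path \emph{is} the creation-and-destruction biography you are looking for: it is born at a genuine summand of $p$, passes through $c_ks_k$ at stage $i$, and dies at a genuine summand of $q$. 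Compressing it (as in Prop.~\ref{cota_primerpiso}) gives at most $3(|I|+|J|-2)+1$ additional first-level arcs and raises the width of every intermediate stage by at most one. There is no entanglement issue because these threads are simply laid alongside $\mathcal{G}'$, sharing only the endpoint stages $S_0$ and $S_m$.
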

\begin{proof}
The first part is in fact a new formulation of Th.~\ref{reduce_to_collection_diabolo}, observing that each node at the $i$-th stage of the multi-stage graph induced by $\mathcal{C}$ corresponds to a summand of the corresponding process $p^i$. When reducing the multi-stage graph, we are pruning some of these summands. Therefore, the obtained intermediate processes at the sequence $\mathcal{C}^\prime$ are the processes $p^{i1}$ of the searched decomposition of $p^i$. Then, the remaining summands in $p^{i2}$ correspond to the nodes from the $i$-th stage of the multi-stage graph that were removed when reducing it.
It is easy to see that any of these summands can be ``reset'' into the multi-stage graph by means of a path that will connect the corresponding node with the two sides of the original multi-stage graph. This requires $(3(|I|+|J|-2)+1)$ additional arcs, and therefore $(3(|I|+|J|-2)+1)$ more steps in the sequence $\mathcal{C}^{\prime\prime}$, while the width of the intermediate processes in it is increased at most by $1$, when adding each of those paths.
\end{proof}

\begin{proposition}\label{bounding_sequence}
For each $f\in\mathds{N}$, there is a constant $C_{2,f}$ such that if we have a sequence $\mathcal{C}$ proving $p\equiv^\alpha_d r$ with $f$ first level steps and $r=\sum_{k\in K} c_kr_k$, with $r_k=\sum_{l\in K_k} c_{k,l}r_{k,l}$, then we can obtain some $r^\prime=\sum_{k\in K} c_k r^\prime_k$ with $r^\prime_k=\sum_{l\in L_k} c_{k,l} r^\prime_{k,l}$, where $L_k\subseteq K_k$, with $|L_k|\leq 2^f ||p||$, whose intermediate processes $p^i$ are of the form $p^i=\sum a^i_jp^i_j$, with $p^i_j=\sum _{k\in K^i_j}b_kp^i_{j,k}$ and $|K^i_j|\leq 2^f ||p||$, and $length(S^\prime)\leq C_{2,f}$.\\
Moreover, taking $m\in\mathds{N}$, we also have a family of constants $C_{2,f,m}$ such that for any $r^{\prime\prime}=\sum c_kr^{\prime\prime}_k$ with $r^{\prime\prime}_k=\sum_{l\in L^\prime_k} c_{k,l}r_{k,l}$, where $L_k\subseteq L^\prime_k\subseteq K_k$, we can also prove $p\equiv^\alpha_d r^{\prime\prime}$, by means of a sequence $\mathcal{C}^{\prime\prime}$ whose intermediate processes $p^i$ satisfy $||p^i||_2\leq max\{2^f ||p||, |L^\prime_k|_{k\in K}\}$ and $length(S^{\prime\prime})\leq C_{2,f,max\{|L^\prime_k|_{k\in K}\}}$. 
\end{proposition}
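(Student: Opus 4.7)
The plan is to proceed by induction on $f$, repeatedly invoking Lem.~\ref{lem} to prune the second-level subtrees of $r$ and to control the width of the second-level subtrees of the intermediate processes $p^i$ along $\mathcal{C}$.

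For the base case $f = 0$, no first-level step occurs in $\mathcal{C}$, so the summands at the first level of $p$ remain in bijection with those of each $p^i$ throughout the sequence, and each first-level branch of $p$ evolves independently, via coinductive steps acting on its subtree, into the corresponding first-level branch of $r$. For each such branch, the sub-sequence witnessing $p_k \equiv^\alpha_{d_k} r_k$ is a proof to which Lem.~\ref{lem} applies at the level immediately below, yielding a pruning $r_k \to r'_k$ with $|L_k|$ bounded by $\|p_k\| + \|r_k\|$, a sub-sequence of length bounded by a fixed constant, and second-level intermediate widths bounded by the same quantity. Collecting these branch-wise prunings into a single sequence delivers the required $\mathcal{S}'$ witnessing $p \equiv^\alpha_d r'$, and all bounds are absorbed into $2^0\|p\| = \|p\|$ and $C_{2,0}$.

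For the inductive step, we split $\mathcal{C}$ at the first first-level step into a prefix $\mathcal{C}_1\colon p \equiv^\alpha_{d_1} p^\dagger$ with no first-level steps, an isolated step $p^\dagger \equiv^\alpha_{d'} p^{\dagger\dagger}$, and a suffix $\mathcal{C}_2\colon p^{\dagger\dagger} \equiv^\alpha_{d_2} r$ with exactly $f-1$ first-level steps. The base case applied to $\mathcal{C}_1$ and the inductive hypothesis applied to $\mathcal{C}_2$ each provide pruned sequences with bounded second-level widths and bounded length. The isolated step between them is either a relabeling (which does not increase widths) or an idempotency, which in the worst case (duplication $at \to at + at$) at most doubles the number of second-level subtrees; this is exactly the source of the factor $2$ in the recurrence $2 \cdot 2^{f-1}\|p\| = 2^f\|p\|$. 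The constant $C_{2,f}$ is then defined recursively in terms of $C_{2,f-1}$ and the length bound provided by Lem.~\ref{lem}. The second part of the statement, concerning $r''$ with $L_k \subseteq L'_k \subseteq K_k$, is obtained by adapting the ``reinsertion'' argument in the second part of Lem.~\ref{lem}: each pruned branch that we wish to reinsert can be ``reset'' by adding a single path through the underlying multi-stage graph, contributing at most one extra unit to the intermediate second-level widths and a bounded number of extra steps per reinserted branch, whence the bounds $\max\{2^f\|p\|, |L'_k|_{k \in K}\}$ and $C_{2,f,\max|L'_k|}$.

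The main obstacle we anticipate is the bookkeeping at the junction between $\mathcal{C}_1$ and $\mathcal{C}_2$: the first-level step used to split the sequence acts on a specific first-level summand whose second-level subtree has already been pruned by Lem.~\ref{lem} on the $\mathcal{C}_1$ side and independently pruned by the inductive hypothesis on the $\mathcal{C}_2$ side, and these two prunings may well disagree on which branches of the subtree survive. Reconciling the two views --- so that the duplicated or merged subtree created by the first-level step carries exactly the branches needed on both sides --- will presumably require either strengthening the induction hypothesis to carry a matching invariant between the two prunings, or a preliminary normalization step that factors first-level idempotency through the second-level pruning. This is precisely the point where earlier attempts to push the argument to deeper levels have stumbled, and is therefore where we expect the decisive difficulty to lie.
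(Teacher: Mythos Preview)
The paper does not actually prove Prop.~\ref{bounding_sequence}. After the informal description preceding Lem.~\ref{lem} (which is explicitly offered as ``a particular case'') and the proof of that lemma, the proposition is stated and immediately followed by Th.~\ref{step_bound_level2}; there is no proof block for Prop.~\ref{bounding_sequence} anywhere in the paper. So there is nothing against which to compare your attempt directly, and your plan---induction on $f$, with Lem.~\ref{lem} feeding the base case and the factor~$2$ arising from a possible idempotent duplication at the isolated first-level step---is very likely the argument the authors had in mind but did not spell out.

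Two comments on the plan itself. First, your base case is too optimistic. For $f=0$ you apply Lem.~\ref{lem} branchwise to each $p_k\equiv^\alpha_{d_k} r_k$, but that lemma controls \emph{intermediate} processes of a sequence, with bound $\|p_k\|+\|r_k\|$; it does not by itself prune the right endpoint $r_k$ down to width $\|p\|$ (or $\|p\|_2$, which is almost certainly what the bound in the statement is meant to read, cf.\ the use in Th.~\ref{step_bound_level2}). You need an extra argument---essentially the diabolo reduction of Th.~\ref{reduce_to_collection_diabolo} applied one level down, together with a choice of which endpoints on the $r_k$ side to retain---to get an honest pruning of $r_k$ with a bound depending only on $\|p_k\|$.

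Second, the obstacle you isolate at the junction is exactly the right one, and it is the same phenomenon the authors flag in the closing paragraph of the section (``a duplication \ldots\ at a lower level could cause an undesired increasing of the `closure' of the union of the two \ldots\ kernels''). Note that when the paper \emph{uses} Prop.~\ref{bounding_sequence} inside Th.~\ref{step_bound_level2}, it does not split at the first first-level step as you do, but at an arbitrary one, applies the proposition \emph{symmetrically} to both halves (reversing $\mathcal{C}^2$), and then invokes the ``Moreover'' clause to pass to the union $r_1+r_2$ of the two kernels. This two-sided reconciliation via the second part of the statement is the mechanism the authors rely on, and your inductive step would need to mimic it---i.e., strengthen the hypothesis so that the pruned endpoint on one side can absorb the kernel demanded by the other side---rather than hope that the two independent prunings happen to agree.
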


\begin{theorem}\label{step_bound_level2}
For all $k\in\{1,2\}$ and $w\in\mathds{N}$ there exists a bound $lb(k,w)\in\mathds{N}$ such that for all $p,q$ with $||p||_k$, $||q||_k\leq w$ and $p\equiv^\alpha_d q$ we can prove the latter by means of a sequence $\mathcal{C}^\prime$ that has no more than $lb(k,w)$ steps in each of its two first levels.
\end{theorem}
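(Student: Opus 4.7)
The plan is to handle the two cases $k=1$ and $k=2$ separately, building directly on Prop.~\ref{cota_total} (for the first level) and Prop.~\ref{bounding_sequence} (for the second level). For $k=1$, given $p, q$ with $||p||_1, ||q||_1 \leq w$, Prop.~\ref{cota_total} immediately yields a coinductive sequence witnessing $p\equiv^\alpha_d q$ with at most $3(||p||_1 + ||q||_1 - 2) + 1 \leq 6w - 5$ first level steps, so setting $lb(1,w) := 6w - 5$ suffices (there is no second-level constraint in the hypothesis of this case, so only the first level is at stake).

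For $k=2$, I would proceed in two stages. First, noting that $||p||_2 \leq w$ entails $||p||_1 \leq w$ (and similarly for $q$), I apply Prop.~\ref{cota_total} to obtain a sequence $\mathcal{C}$ witnessing $p \equiv^\alpha_d q$ with at most $f := 6w - 5$ first level steps. Second, I feed $\mathcal{C}$ into the second part of Prop.~\ref{bounding_sequence} taking $r := q$ and $r^{\prime\prime} := q$ itself, so that $L^\prime_k = K_k$ and the containment $L_k \subseteq L^\prime_k$ holds trivially. The hypothesis $||q||_2 \leq w$ guarantees $\max_{k \in K} |L^\prime_k| \leq w$, whence the proposition returns a sequence $\mathcal{C}^{\prime\prime}$ of total length at most $C_{2, f, w} = C_{2, 6w-5, w}$, which simultaneously bounds the number of level-1 and level-2 steps. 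Setting $lb(2,w) := \max(6w-5, C_{2, 6w-5, w})$ then completes the case.

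A subtle point that would need to be verified is that reinserting the pruned second-level branches via the second part of Prop.~\ref{bounding_sequence} does not inflate the level-1 step count beyond $f$. Since the construction there only interleaves additional level-2 manipulations to reinstate the pruned branches without altering the level-1 pattern, the first-level bound should be preserved; as a safety net, one can apply Prop.~\ref{cota_total} once more to the final sequence to re-enforce the level-1 bound without disturbing the deeper structure, possibly updating the constant inside the definition of $lb(2,w)$ accordingly.

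The real difficulty, as hinted in the paper's introduction, is not in this particular two-level argument but in pushing it further. The constant $C_{2,f,w}$ already exhibits exponential growth in $f$ via the $2^f \cdot ||p||$ factor appearing in Prop.~\ref{bounding_sequence}; an analogous argument at level three would have to bound the intermediate widths at level three in terms of the combined level-1-and-2 step counts, and iterating at deeper levels yields towers of exponentials. More importantly, the \emph{uniformity} across the projections of $t$ and $t^\prime$ that one needs in order to extract a telescopic collection and invoke Th.~\ref{proyecciones} seems to elude any clean inductive treatment at arbitrary depth, which appears to be the crux of the obstruction to a full continuity proof.
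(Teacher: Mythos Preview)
Your argument is correct and, for the $k=2$ case, actually more direct than the paper's own proof. Both proofs handle $k=1$ identically via the first-level bound (Prop.~\ref{cota_total} / Th.~\ref{reduce_to_collection_diabolo}). For $k=2$, however, the paper proceeds by a case split: if the sequence has no first-level step it factorizes componentwise and applies Th.~\ref{reduce_to_collection_diabolo} to each factor; otherwise it picks a first-level step, decomposes the sequence as $\mathcal{C}^1\circ\langle s\rangle\circ\mathcal{C}^2$, applies Prop.~\ref{bounding_sequence} separately to $\mathcal{C}^1$ (starting from $p$) and to the reversal of $\mathcal{C}^2$ (starting from $q$), and then uses the ``moreover'' clause to merge the two prunings at the split point into a common intermediate process before rejoining. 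You instead notice that since $\|q\|_2\leq w$ already bounds $\max_k|K_k|$, the ``moreover'' clause of Prop.~\ref{bounding_sequence} can be invoked \emph{once} on the whole sequence with $r''=r=q$, yielding the length bound $C_{2,\,6w-5,\,w}$ directly and bypassing the split-and-reconcile machinery. The paper's longer route is visibly an attempt to rehearse an inductive pattern---working inward from both well-controlled ends---that one might hope to push to deeper levels; your shortcut exploits the two-level hypothesis squarely and would not obviously iterate. Your ``safety net'' remark about reapplying Prop.~\ref{cota_total} is unnecessary (the total-length bound already dominates the level-1 count), but harmless.
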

\begin{proof}
At the same time that the existence of the bound $lb(k,w)$ we will prove a bound $wb(k,w)$ for the width $||p^i||_2$ of any process $p^i$ along the $2$-unfolding of the sequence $\mathcal{C}^\prime$.\\
\underline{k=1} Th.~\ref{reduce_to_collection_diabolo} just states the result for this case, taking $wb(1,w)=2w$ and $lb(1,w)\leq 6w$. \\
\underline{k=2} Let $\mathcal{C}$ be a sequence proving $p\equiv^\alpha_d q$ which has less than $lb(1,w)$ steps at the first level. If $\mathcal{C}$ contains no first level step, then we have $p=\sum_{i=1}^n a_ip_i$, $q=\sum_{i=1}^n a_i q_i$ and $\mathcal{C}$ can be factorized into a collection of sequences $(\mathcal{C}_i^\prime)_{i=1}^n$ proving $p_i \equiv_{d_i} q_i$. 
Then, we only need to apply Th.~\ref{reduce_to_collection_diabolo} to each of these sequences, so that we could take $wb(2,w)=2w$ and $lb(2,w)=2w*6w=12w^2$.\\
If $\mathcal{C}$ contains some first level step, we select any of them and divide the $2$-unfolding of $\mathcal{C}$ into $\mathcal{C}^1\circ \langle s \rangle\circ\mathcal{C}^2$, where by abuse of notation we identify the sequences and their $1$-unfolding. Let us consider the case in which the central step corresponds to a relabeling $ap_i^{\prime}\rightarrow bp_i^\prime$ (the other cases are analogous). Assume that $\mathcal{C}^1$ proves $p\equiv^\alpha_{d_1}p^\prime$ and $\mathcal{C}^2$ proves $q^\prime\equiv^\alpha_{d_3}q$, with $d=d_1+{\bf d}(a,b)+d_3$. We can apply Prop.~\ref{bounding_sequence} to both sequences, taking $\mathcal{C}^2$ in the opposite direction, getting $r^\prime$ and $r^{\prime\prime}$, with $\mathcal{C}^{\prime1}$ proving $p\equiv^\alpha_{d_1} r^\prime$ and $\mathcal{C}^{\prime2}$ proving $r^{\prime\prime}\equiv^\alpha_{d_3} q$, that have the same steps at the first level than $\mathcal{C}^1$ and $\mathcal{C}^2$, respectively, and only use intermediate processes $p^i$ with $||p^i||_2\leq2^{lb(1,w)-1}*w$. The result is also valid for any intermediate process between $r_1$ and $p^\prime$, and anyone between $r_2$ and $q^\prime$, but increasing the bounds in the adequate way. In particular, for the process $r_1+r_2$, we have $p\equiv^\alpha_{d_1} r_1+r_2$ and $r_1+r_2\equiv^\alpha_{d_3} q$, by means of two sequences that only include intermediate processes $p^i$ with $||p^i||_2\leq 2^{lb(1,w)}*w$. Joining these two sequences with the step $s$ we obtain the sequence $\mathcal{C}^\prime$. Adding the bounds for the corresponding lengths of the $2$-unfolding of the two sequences obtained by application of  Prop.~\ref{bounding_sequence}, we obtain the bound $lb(2,w)$. To be more precise, the definitive value of $lb(2,w)$ will be the maximum of the bounds for the different cases (there are finitely many) considered above.
\end{proof}

Certainly, the proof is quite involved and difficult to follow, but it works in this case. Unfortunately, when we tried to develop it for the general case we discovered that some details failed, or at least cannot be justified in the same way. Let us roughly explain why: the argument used in Lemma~\ref{lem} looks for an small ``kernel'' $p^{i1}$ of any intermediate process along a sequence. Then Prop.~\ref{bounding_sequence} joins the kernels obtained when considering the two subsequences starting at any side of the original sequence and ending at any intermediate process. The second part of Lemma~\ref{lem} says us that we can get in this way a sequence satisfying the desired bounds. But when we are in a deeper level, a duplication (by using idempotency) at a lower level could cause an undesired increasing of the ``closure'' of the union of the two (one from each side) kernels. And this problem could appear over and over (at least, we are not able to prove that this is not the case). Therefore, we have to leave this problem open, at the moment, any help?

\section{Conclusions and future work}
The first two authors of this paper are involved in a detailed study of bisimulation distances, that will constitute in fact the forthcoming Ph.D. Thesis of the first one, supervised by the second. As we have already pointed out, while the global bisimulation distance we propose is more difficult to manage than other existing bisimulation distances, we claim that it provides a much more accurate measure of the differences between two processes. Moreover, our approach has many nice properties, and we still think that continuity with respect to projections is one of them. Even if we have not been able yet to present a full continuity proof, we think that, after having invested a great number of hours in this quest, time has come to present here the (partial) results we got so far.

We are now working on several extensions of our approach. In particular the modal interface framework \cite{lnw07, lv13, rbbclp11} is a quite suggestive one, where we are obtaining quite promising results. We are also interested in stochastic distances \cite{fpp11} and those based on logics \cite{dgjp04}. When considering probabilistic choices between branches associated with the same action, these two notions of distance \emph{do} capture global differences. However, this is not the case when choices between different actions are considered. It is our intention to provide a ``fully global'' probabilistic distance (covering also the latter case) that will extend our global distance to these frameworks. Finally, we recently started to consider interval temporal logics~\cite{beatcs11},
aiming at devising a suitable notion of global distance between interval models.

\bibliographystyle{eptcs}
\bibliography{bibliografia}
\end{document}